\newcommand{\tabincell}[2]{\begin{tabular}{@{}#1@{}}#2\end{tabular}}
\newtheorem{definition}{Definition}
\newtheorem{lemma}{Lemma}
\DeclareMathOperator{\diag}{diag}
\newcommand{\m}{3.3}
\begin{document}

\title{\huge MARS: Message Passing for Antenna and RF Chain Selection for Hybrid Beamforming in MIMO Communication Systems}

\author{
Li-Hsiang Shen,~\IEEEmembership{Member,~IEEE},
Yen-Chun Lo,
Kai-Ten Feng,~\IEEEmembership{Senior Member,~IEEE},
Sau-Hsuan Wu,~\IEEEmembership{Member,~IEEE},
Lie-Liang Yang,~\IEEEmembership{Fellow,~IEEE}}


\maketitle

\begin{abstract}
In this paper, we consider a prospective receiving hybrid beamforming structure consisting of several radio frequency (RF) chains and abundant antenna elements in multi-input multi-output (MIMO) systems. Due to conventional costly full connections, we design an enhanced partially connected beamformer employing a low-density parity-check (LDPC)-based structure. As a benefit of the LDPC-based structure, information can be exchanged among clustered RF/antenna groups, which results in a low computational complexity order. Advanced message passing (MP) capable of inferring and transferring information among different paths is designed to support the LDPC-based hybrid beamformer. We propose a message-passing enhanced antenna and RF chain selection (MARS) scheme for minimizing the operational power of antennas and RF chains of the receiver as well as hybrid beamforming. Furthermore, sequential and parallel MP schemes for MARS are designed, namely, MARS-S and MARS-P, respectively, to address the convergence speed issue. A heuristic genetic algorithm is designed for receiving hybrid beamforming, comprising gene generation initialization, elite selection, crossover, and mutation. Simulations validate the convergence of both the MARS-P and the MARS-S algorithms. Due to the asynchronous information transfer of MARS-P, it requires higher power than MARS-S, which strikes a compelling balance among power consumption, convergence, and computational complexity. It is also demonstrated that the proposed MARS scheme outperforms the existing benchmarks using the heuristic method of fully/partially connected architectures in the open literature by requiring the lowest power and realizing the highest energy efficiency.
\end{abstract}

\begin{IEEEkeywords}
Hybrid beamforming, MIMO, LDPC, power minimization, message passing.
\end{IEEEkeywords}

{\let\thefootnote\relax\footnotetext
{Li-Hsiang Shen is with the Department of Communication Engineering, National Central University, Taoyuan 320317, Taiwan. (email: shen@ncu.edu.tw)}}

{\let\thefootnote\relax\footnotetext
{Yen-Chun Lo, Kai-Ten Feng, and Sau-Hsuan Wu are with the Department of Electronics and Electrical Engineering, National Yang Ming Chiao Tung University (NYCU), Hsinchu 300093, Taiwan. (email: tommylo0915@gmail.com, ktfeng@nycu.edu.tw, and 
sauhsuan@nycu.edu.tw)}}

{\let\thefootnote\relax\footnotetext
{Lie-Liang Yang is with the Department of Electronics and Computer Science, University of Southampton, Southampton SO17 1BJ, U.K. (e-mail: lly@ecs.soton.ac.uk)
}}

\section{Introduction}
Wireless communication systems are experiencing revolutionary advances due to rapid changes in the creation and sharing of diverse information by human society. Abundant applications with exponentially high wireless traffic demands are emerging \cite{1,acm}. In recent years, researchers have investigated promising technology for supporting advanced wireless communication systems with high spectral efficiency and energy efficiency (EE). A denser deployment of base stations (BSs) with a larger bandwidth can achieve high performance with increasing degrees of freedom from the perspective of hardware and radio resources but requires substantial consumption of electricity \cite{2}. Accordingly, serious atmospheric pollution issues will arise due to global carbon emissions ascribed to the information and communications technology industry. Therefore, it would be beneficial if BSs were capable of efficiently employing low-power configurations to meet stringent service requirements and resiliently improve network performance \cite{3}.

	As a prospect of wireless transmission systems, a massive multi-input multi-output (MIMO) system deployed within a BS is capable of transceiving enormous radio frequency (RF) signals, benefiting from its gains from multiplexing and diversity \cite{4}. However, it potentially provokes power dissipation in MIMO due to diverse transmission directions. Therefore, the beamformer architecture of a MIMO system should be well established and appropriately designed to concentrate transmission power in certain directions \cite{5}. Conventionally, there exist low-cost analog and high-performance digital beamforming techniques that can resolve the abovementioned issues \cite{7}. Analog beamforming aims to adjust the phase shifters of antennas to perform directional data transfer while consuming few circuit power resources. However, limited selection of directional beams is intrinsically induced by the quantized phases from hardware constraints. In addition, digital beamforming can perform a more comprehensive task in terms of baseband inter-beam interference cancellation, providing full management of directional beam coverage. Nevertheless, digital structure fundamentally requires a variety of RF chains connected to individual antennas, which requires substantial expenditure for configuration, power consumption and transceiver deployment. Therefore, the hybrid beamforming (HBF) architecture has become a promising and implementable solution by leveraging the advantages of both low-cost analog structures and highly directional digital structures \cite{9}. Accordingly, fewer RF chains can be constructed to associate with enormous antennas, which reduces the cost while maintaining the asymptotic performance of digital beamforming.

	In recent years, abundant research has focused on HBF architectural design but has been confined to specific structures, i.e., full connection and subconnection using separate subarrays \cite{10,11}. Due to the high degree of freedom of full connection in HBF, it enables the BS to achieve a nearly optimal solution. In \cite{14}, only analog beamforming, i.e., beam directions, is considered, which aims to maximize multi-device-based resource utilization. However, joint baseband digital and analog phase shifter precoders should be designed, which have considerably high computational complexity \cite{15}. In \cite{17,18}, beamforming schemes were designed from protocol perspectives. Although deep learning can alleviate the complexity issue, it cannot derive either closed forms or proofs of optimality or convergence \cite{21}. Accordingly, some papers focused on designs of a disjoint solution of analog and digital beamforming under the conventional fully connected HBF architecture \cite{22,23}. For separate subarrays in HBF, RF chains take control of independent subsets of phase shifters and antenna elements. In surveys \cite{10,11}, the authors investigated many subconnection architectures of HBF. In \cite{24}, the authors design a disjoint analog/digital beamforming for subarray architectures. Nonetheless, the performance of subarrays is potentially limited due to the inflexibility and inaccessibility of information from other RF chains or antenna sets. In \cite{newb1}, low-resolution quantizers of digital-to-analog converters were adopted in partially-connected HBF architecture for improving system spectral efficiency. In \cite{newb2}, partially-connected HBF is designed for advanced applications for multiuser integrated sensing and communication systems. In \cite{28}, the bit error rate was minimized under various RF/antenna connections. Therefore, it is important to design a novel HBF architecture to achieve high operational flexibility and asymptotic performance of digital beamforming. 
	
	However, increasingly large-scale HBF architectures with enormous antennas and RF chains serving multiple devices will lead to substantially high overhead in terms of power and complexity, which has almost not been discussed or optimized in the above mentioned works \cite{14,15,17,18,21,22,23,24,28}. In \cite{27}, the authors discussed different RF connections for subarray-based hybrid beamformers. The power consumption factor under each architecture was also analyzed. In \cite{new1}, a generalized subarray-connected architecture was introduced for improving the EE in HBF, which aims to solve rate optimization problems with each subproblem related to a single subarray. A dynamic adjustment of the subarray architecture under partial connections was designed to achieve high EE performance \cite{new1}. In \cite{29,30,31,32}, the authors addressed the importance of optimizing EE from either an architectural or hardware perspective. The operating power amplifiers and transmit power were considered in \cite{29,30} but without consideration of RF/antenna effects, which were taken into account in \cite{31,32}. However, all of these studies focused on fully connected HBF architectures optimizing EE, which requires high complexity. To elaborate further, conventional HBF architecture with high power consumption may be inappropriate for the future promising applications in joint radar sensing and communication systems \cite{newisac1, newisac2}. Moreover, under such architectures, it is unnecessary for all RF chains and antennas to be operated due to worse channel conditions and high power consumption, which should be appropriately selected to prevent performance degradation. Moreover, in \cite{newnew1}, joint antenna and beamforming is designed in automotive radar sensing-communications for mitigating beam interference under certain power consumption. In \cite{newnew2}, an antenna selection scheme in HBF for terahertz-band sensing-communication is considered for maximizing the spectrum efficiency of the overall system. In \cite{33}, antenna selection methods were designed for full-connection HBF maximizing the overall data rate. On the other hand, \cite{36,37} performed RF chain selection in a fully connected HBF architecture by optimizing energy utilization. However, none of these studies considered the joint selection of RF chains or antennas, which is capable of supporting potentially resilient beamforming with high EE. Accordingly, the joint design of RF/antenna selection and power consumption, as well as flexible HBF architectures, should be considered.

	In a flexible HBF architecture, RF chains are partially connected to the corresponding antenna sets. However, some direct links between RF chains and antennas do not exist, which potentially induces information vanishing. Message passing (MP), which originated from factor graph theory \cite{38}, is deemed a prospective technique for large-scale networking, which possesses the capability to transfer desirable data\textsuperscript{\ref{note1}}\footnotetext[1]{Note that the term "data" in MP originates from the field of computer science, which in this paper is referred to as the candidate solution of parameters in our system, i.e., parameters of RF/antenna selections are passed among neighboring nodes. The term "data" in the wireless communication domain specifically indicates transferred "uplink/downlink data" resulting in a corresponding uplink/downlink data rate, which is conveyed from a transmitter and received by a receiver end.\label{note1}} and information among enormous links and nodes \cite{39, yang1}. In \cite{42}, MP-aided subarray-based link optimization was resolved, whereas the power allocation problem for beamforming was considered by MP in \cite{43}. In \cite{44}, MP-based receiving beamforming was designed under full connection, while antenna selection using MP was developed in \cite{45}. However, the joint design of a partially connected HBF architecture and RF/antenna selection has not been studied in previous works. Benefitting from resilient HBF, MP is particularly suitable for partially connected HBF architectures with indirect links. The policy and determination of RF sets can be readily passed through HBF links to the corresponding antenna sets. However, under a flexible HBF connection architecture, partial links potentially lead to unreachable information between faraway nodes, and it becomes compellingly imperative to design a novel HBF architecture. 

Inspired by the concept of low-density parity-check (LDPC) \cite{46} coding, the HBF architecture is redesigned to ensure that different sets become dependent with the fewest links \cite{47}. 
In other words, information can be definitely conveyed from one set to the other one within polynomial time, which is not considered in existing HBF architecture design. Moreover, under the partial connections of LDPC coding, MP can effectively and efficiently convey policies among RF/antenna nodes in HBF. Accordingly, we design a novel flexible HBF architecture by employing an LDPC-based structure considering the minimization of power consumption and RF/antenna selection in a massive MIMO communication system. The contributions of this work are summarized as follows.
\begin{description}
	\item[$\bullet$]
	We design an LDPC-based connection for an HBF structure considering antenna and RF chain selection at the receiver. Compared with a fully connected beamformer, each RF chain is partially connected to a cluster of antennas. For arbitrary pairs of RF chains or antennas, we prove that there exists at least one path that can transfer information from one set to the other within polynomial time, i.e., all RF/antenna sets at the receiver are dependent on the LDPC-based connection, unlike in existing partially connected beamformers.
	
	\item[$\bullet$]
	We simultaneously consider both antenna and RF chain selections at the receiver as well as hybrid beamforming, which have not been jointly designed in the literature. We aim to minimize the receiver power consumption constrained by the minimum transmission data rate and the maximum tolerable system power usage. The formulated problem is decomposed into subproblems of selection of antennas and RF chains as well as hybrid beamforming at the receiver, which are computed separately in each virtual processing controller depending on the exchanged message.
		
	\item[$\bullet$]
	We propose a message-passing enhanced antenna and RF chain selection (MARS) scheme to minimize the operational circuit power of antennas and RF chains as well as a heuristic genetic algorithm for receiving hybrid beamforming. Sequential and parallel MP schemes for MARS are designed, namely, MARS-S and MARS-P, respectively. The antenna/RF virtual controller employing MARS-S passes the determined solutions in a sequential manner, while candidate outcomes of MARS-P are simultaneously transmitted to neighboring RF/antenna nodes. Considering complexity and convergence, MARS can be performed in either a centralized or distributed manner. In a centralized architecture, more RF/antenna nodes are clustered and managed by the virtual controller than a distributed system. Moreover, genetic-based HBF comprises genetic process of gene generation initialization, elite selection, crossover, and mutation.
	
	\item[$\bullet$]
    We evaluate the system performance of MARS employing LDPC-based connections in receiving HBF under both centralized and distributed architectures. We quantify the power consumption and corresponding EE with respect to different quality-of-service (QoS) values, numbers of clustered RF/antenna nodes and connections, and insertion loss values between RF chains and antenna elements. The proposed MARS scheme outperforms existing techniques by realizing lower power consumption and higher EE.
\end{description}

\textit{Notations}: We define bold capital letter $\mathbf{A}$ as a matrix, bold lowercase letter $\mathbf{a}$ as a vector, and $\mathcal{A}$ as a set. $[\mathbf{A}]_{mn}$ denotes the $(m,n)$-th element of matrix $\mathbf{A}$. Matrix operations $\mathbf{A}^H$, $\mathbf{A}^T$ and $\mathbf{A}^{-1}$ represent the Hermitian transpose, transpose and inverse of $\mathbf{A}$. $\lVert \mathbf{A} \rVert_F$ indicates the Frobenius norm of $\mathbf{A}$. $\det(\cdot)$, $\diag(\cdot)$ and $\mathbbm{E}[\cdot]$ are defined as the determinant, diagonal matrix and expectation operation, respectively. $\mathbf{A} \circ \mathbf{B}$ indicates the element-wise product of matrices $\mathbf{A}$ and $\mathbf{B}$. $\mathbbm{1}(\mathcal{X})$ denotes an indicator function that is one when event $\mathcal{X}$ takes place. The operations $\neg$, $\land$, $\lor$ and $\oplus$ indicate the binary-wise logical operations NOT, AND, OR and exclusive-OR, respectively. Additionally, $\bigvee$ denotes the sequential logic operations of $\lor$, i.e., $\bigvee_{i=1}^{N_x}x_i= x_1 \lor x_2 ... \lor x_{N_x}$, where $N_x$ is the dimension of $\{x_i\}$. $\text{mod}(a,b)$ is the modulus operation with arbitrary integers $a$ and $b$, e.g., $\text{mod}(a,b)=0$ indicates that $a$ is an integer multiple of $b$. $\left\lceil \cdot \right\rceil$ is a ceiling function. $\mathfrak{R}\{A\}$ and $\mathfrak{I}\{A\}$ indicate the real and imaginary parts of a complex variable $A$, respectively.

\begin{figure}
\centering
\includegraphics[width=3.4in]{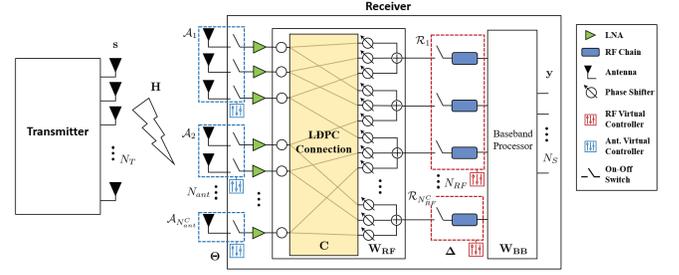}
\caption{Proposed structure of the antenna and RF chain selection for hybrid beamforming in uplink.} \label{system}
\end{figure}

\section{System Model and Problem Formulation}
\subsection{System Model} \label{SM}
    We consider a partially connected HBF architecture at the receiving BS in an uplink massive MIMO communication system, as shown in Fig. \ref{system}. The receiving BS is equipped with RF chains $\mathcal{R} \in \{1,...,n,...,N_{RF}\}$ and receiving antenna elements $\mathcal{A} \in \{1,...,m,...,N_{ant}\}$, where $N_{RF}$ and $N_{ant}$ indicate the total numbers of HBF RF chains and receiving antennas, respectively. Without loss of generality, we assume that the hybrid beamformer possesses fewer RF chains than antennas due to the reduction in hardware expenditure in the fully digital beamformer, i.e., $N_{RF} \leq N_{ant}$. The receiving hybrid beamformer consists of an analog beamformer $\mathbf{{W}_{RF}} \in {\mathbbm{C}}^{N_{RF}\times N_{ant}}$ and a baseband digital beamformer $\mathbf{{W}_{BB}} \in {\mathbbm{C}}^{N_{S}\times N_{RF}}$, where $N_{S}$ denotes the number of data streams. The connection $\mathbf{C} \in {\mathbbm{Z}}^{N_{RF}\times N_{ant}}$ between the analog and baseband beamformers can be either in a full or partial connection, where ${[\mathbf{C}]}_{nm} \in \{0,1\}, \forall n,m,$ indicates whether the $n$-th RF chain is linked to the $m$-th antenna. The desired transmitted signal is defined as $\mathbf{s} \in {\mathbb{C}}^{N_{T}\times 1}$, where $N_{T}$ is the number of antennas at the transmitter of the user terminal.

To manage the energy consumption of HBF, we can switch off spoiled RF chains or antennas to optimize the power utilization. The selection matrices of RF chains and antenna elements are denoted by $\mathbf{\Delta} = \diag(\boldsymbol{\delta}) = \diag(\delta_1,...,\delta_{n},...,\delta_{N_{RF}}) $ and $\mathbf{\Theta} = \diag(\boldsymbol{\theta}) = \diag(\theta_1,...,\theta_{m},...,\theta_{N_{ant}})$, respectively, where $\delta_{n}\in\{0,1\}$ and $\theta_{m}\in\{0,1\}$ indicate the selection decisions for RF chains and antennas, e.g., $\delta_{n}=1$ and $\theta_{m}=1$ mean that the $n$-th RF chain and $m$-th antenna are selected to be turned on. Since we focus on the designs at the receiving hybrid beamformer, the transmission power $P_{T}$ is considered to be equally allocated to transmit antennas\textsuperscript{\ref{noten}}\footnotetext[2]{Notation of $\mathbf{s}$ can be similarly generated with the transmit beamforming technique, as considered in most papers, i.e., $\mathbf{s}$ can be expressed in the form of $\mathbf{s} = \mathbf{W_{RF,tx}} \mathbf{W_{BB,tx}} \mathbf{x}$, where $\mathbf{W_{RF,tx}}, \mathbf{W_{BB,tx}}$ and $\mathbf{x}$ are defined as the transmit RF, baseband beamformers, and original transmit signal, respectively. However, this will require highly complex algorithms with unaffordable complexity order when jointly considering both transmit and receiving beamforming as well as RF/antenna selection mechanisms. \label{noten}}, i.e., $\mathbbm{E}\left[\mathbf{s}{\mathbf{s}}^H\right] \!=\! \frac{P_{T}}{N_T}\mathbf{I}_{N_{T}}$, where $\mathbf{I}_{N_{T}}$ is the identity matrix of size $N_{T}$. The received uplink signal is given by
\begin{align} \label{recvsignal}
    \mathbf{y} &= \sqrt{\rho \left(1-\beta\right)} \mathbf{W_{BB} \Delta \left(C \circ W_{RF} \right) \Theta H s} \notag \\
    & \qquad\qquad + \mathbf{W_{BB} \left( C \circ W_{RF}\right) n},
\end{align}
where $\circ$ is the Hadamard product for elementwise multiplication. In $\eqref{recvsignal}$, $\rho$ is the distance-based loss and $\mathbf{H} \in \mathbb{C}^{N_{ant} \times N_T}$ is the small-scale fading channel, which is defined based on the Saleh-Valenzuela model \cite{24,30} as
\begin{align} \label{SVmodel}
    \mathbf{H} = \sqrt{\frac{N_T N_{ant}}{L}}\sum_{\ell=1}^{L} {\alpha}_{\ell} \cdot \boldsymbol{\alpha}_{r}({\phi}_{\ell}^{r}) \boldsymbol{\alpha}_{t}^{H}({\phi}_{\ell}^{t}),
\end{align}
where $L$ is the number of multipaths and ${\alpha}_{\ell}  \sim \mathcal{CN}(0,1)$ is the complex channel gain of the $\ell$-th path. The array response vector corresponding to the angle of arrival (AoA) of the HBF receiver and the angle of departure (AoD) of the transmit BS are given by $\boldsymbol{\alpha}_{r}({\phi}_{\ell}^{r})$ and $\boldsymbol{\alpha}_{t}({\phi}_{\ell}^{t})$, respectively, along with their incident receiving and transmit angles of ${\phi}_{\ell}^{r}$ and ${\phi}_{\ell}^{t}$. In $\eqref{recvsignal}$, $\mathbf{n} \!\sim\! \mathcal{CN}(0,N_{0}\mathbf{I}_{N_R})$ is complex additive white Gaussian noise with a noise power spectral density of $N_{0}$. Furthermore, we consider the insertion loss $\beta$ in the uplink receiving HBF-based BS \cite{48}, which is the connection loss between an RF chain and its connected antenna elements due to signal power dissipation. Based on the received uplink signal in $\eqref{recvsignal}$, the achievable system rate can be written as
\begin{align} \label{recvcapacity}
	 &R(\mathbf{\Delta,\Theta}, \mathbf{W}_{\mathbf{RF}}, \mathbf{W}_{\mathbf{BB}}) \!=\! \notag\\
	 &\log_2 \left[\det\left| \mathbf{I}_{N_S} \!+\! \frac{P_{T} \rho \left(1-\beta\right)} {N_{0} N_{T}}\mathbf{W_{BB} \Delta \left( C \circ W_{RF}\right) \Theta H} \mathbf{H}^{H}  \right. \right. \nonumber\\
	& \left. \left. \mathbf{\Theta}^{H} \mathbf{W}^{H}_{\mathbf{RF}}\mathbf{\Delta}^{H} \mathbf{W}^{H}_{\mathbf{BB}} \left(\mathbf{W_{BB} \left( C \circ W_{RF}\right)} \mathbf{W}^{H}_{\mathbf{RF}} \mathbf{W}^{H}_{\mathbf{BB}}\right)^{-1}\right|\right].
\end{align}
The total power consumption at the receiver HBF-based BS can be represented by \cite{30,32}
\begin{align} \label{powerconsumption}
	 P(\boldsymbol{\delta},\boldsymbol{\theta}) 
	 & = P_{BB} + \sum_{n=1}^{N_{RF}}  \left[ {\delta}_{n} \left( P_{RF}+P_{ADC} \right) \right. \notag \\
	 & \left. + \sum_{m=1}^{N_{ant}} \left( {\theta}_{m} P_{LNA} + [\mathbf{C}]_{nm} {\delta}_{n} {\theta}_{m} P_{PS} \right) \right],
\end{align}
where $P_{BB}$, $P_{RF}$, $P_{ADC}$, $P_{LNA}$ and $P_{PS}$ denote the power of HBF baseband signal processing, circuit operating power of RF chains, analog-to-digital converter, low-noise power amplifiers (LNA) and phase shifters at receiving antennas, respectively. We observe that $\eqref{powerconsumption}$ includes a static power term $P_{BB}$ and dynamic power terms $P_{RF}$, $P_{ADC}$, $P_{LNA}$ and $P_{PS}$ because the dynamic terms are determined by the selection of RF chains and antennas. The system parameters and corresponding notations are listed in Table \ref{Parameters}.

\begin{table}
\begin{center}
\scriptsize
\caption {Definition of System Parameters}
    \begin{tabular}{|l|l|}
    \hline
        Parameter & Notation \\
        \hline\hline
		Sets of antennas and RF chains & $\mathcal{A}, \mathcal{R}$ \\ \hline
		Numbers of transmitting/receiving antennas and RF chains & $N_{T}, N_{ant}, N_{RF}$ \\ \hline
		Number of data streams & $N_{S}$  \\ \hline
		HBF analog and baseband beamformers & $\mathbf{{W}_{RF}}, \mathbf{{W}_{BB}}$ \\ \hline
		HBF connection & $\mathbf{C}$ \\ \hline
		RF chain and antenna selection matrices of & $\mathbf{\Delta}, \mathbf{\Theta}$ \\ \hline
		RF chain and antenna element-wise selection & $\delta_{n}, \theta_{m}$ \\ \hline
		Transmit, received, and noise signals & $\mathbf{s}, \mathbf{y}, \mathbf{n}$ \\ \hline
		Transmit power & $P_{T}$ \\ \hline
		Noise power & $N_{0}$ \\ \hline
		Small- and large-scale channel fading & $\mathbf{H}, \rho$ \\ \hline
		Number of multipaths & $L$ \\ \hline
		Insertion loss & $\beta$ \\ \hline
		Achievable rate & $R(\mathbf{\Delta,\Theta}, \mathbf{W}_{\mathbf{R\!F}}, \mathbf{W}_{\mathbf{B\!B}})$ \\ \hline
		Total power consumption & $P(\boldsymbol{\delta}, \boldsymbol{\theta})$ \\ \hline
		\tabincell{l}{Power consumption of baseband processing, RF chains, \\ ADC, LNA circuit, and phase shifter} & \tabincell{l}{$P_{B\!B},  P_{R\!F}$, $P_{A\!D\!C}$\\$ P_{L\!N\!A},  P_{P\!S}$} \\ \hline
		Number of LDPC connections & $N_{Conn}$ \\ \hline
		Sets of RF and antenna controllers & $\mathcal{R}^{\mathcal{C}} ,  \mathcal{A}^{\mathcal{C}}$ \\ \hline
		$k$-th RF and $l$-th antenna controller & $\mathcal{R}_{k},\mathcal{A}_{l}$ \\ \hline
		Numbers of RF and antenna controllers & $N_{RF}^{C} ,  N_{ant}^{C}$ \\ \hline
		Numbers of RF and antenna elements controlled & $N_{RF,k} ,  N_{ant,l}$ \\ \hline
		RF decision/antenna update sets of RF controllers & $\boldsymbol{\delta}_{\mathcal{R}_{k}} ,  \boldsymbol{\theta}_{\mathcal{R}_{k},l}$ \\ \hline
		Antenna decision/RF update sets of antenna controllers & $\boldsymbol{\theta}_{\mathcal{A}_{l}} ,  \boldsymbol{\delta}_{\mathcal{A}_{l},k}$ \\ \hline
		MP operation from RF to antenna controller & $\mu_{\mathcal{R}_{k} \rightarrow \mathcal{A}_{l}}(\cdot)$ \\ \hline
		MP operation from antenna to RF controller & $\nu_{\mathcal{A}_{l} \rightarrow \mathcal{R}_{k}}(\cdot)$ \\ \hline
		Thresholds for RF and antenna selection & $\eta_{r}, \eta_{a}$ \\ \hline
    \end{tabular} \label{Parameters}
\end{center}
\end{table}

\subsection{Problem Formulation} \label{FS}
	Benefiting from a partial HBF architecture that employs LDPC-based connections\textsuperscript{\ref{note3}}\footnotetext[3]{LDPC-based connections still require all antennas and RF chains for channel estimation and selection mechanism. Note that new hardware architecture may increase hardware cost as well as impose higher insertion and routing losses from additional switches. At high frequencies in millimeter wave, there are typically high losses in printed circuit boards and switches. \label{note3}}, we can guarantee the dependency of the decided information, i.e., information can be exchanged among arbitrary nodes of RF chains and antennas. In this paper, we aim to minimize the total operational circuit power $P(\boldsymbol{\delta},\boldsymbol{\theta})$ by selecting the candidate RF chains of $\boldsymbol{\delta}$ and antenna elements of $\boldsymbol{\theta}$ at the uplink receiving HBF-enabled BS as well as the hybrid beamformers $\mathbf{W}_{\mathbf{RF}}$ and $\mathbf{W}_{\mathbf{BB}}$, which is formulated as
\begingroup
\allowdisplaybreaks
\begin{subequations} \label{mainprob}
  \begin{align}
        &\min_{\boldsymbol{\delta},\boldsymbol{\theta}, \mathbf{W}_{\mathbf{RF}}, \mathbf{W}_{\mathbf{BB}}}\quad {P(\boldsymbol{\delta},\boldsymbol{\theta})}, \\
        &\text{s.t.} \quad {\delta}_{n} \in \{0,1\}, \quad \forall n, \label{RF_blim} \\
        &\quad\quad {\theta}_{m} \in \{0,1\}, \quad \forall m, \label{ant_blim} \\
        &\quad\quad {N}_{S} \leq \sum_{n=1}^{N_{RF}} {\delta}_{n} \leq \sum_{m=1}^{N_{ant}} {\theta}_{m}, \label{RF_numlim} \\
        &\quad\quad R(\mathbf{\Delta,\Theta},\mathbf{W}_{\mathbf{RF}}, \mathbf{W}_{\mathbf{BB}}) \geq R_{req}, \label{SR_req} \\
        &\quad\quad P(\boldsymbol{\delta},\boldsymbol{\theta}) \leq P_{max}, \label{P_lim} \\
        &\quad\quad P_{PS}\sum_{m=1}^{N_{ant}} {\theta}_{m}[\mathbf{C}]_{nm} \leq ( 1 \!-\!\beta)P_{o}, \quad \forall n, \label{RF_insertionloss}\\
        &\quad\quad |[\mathbf{W}_{\mathbf{RF}}]_{nm}|=1, \quad \forall m, n \label{con_RF}.
  \end{align}
\end{subequations}
\endgroup
Constraints $\eqref{RF_blim}$ and $\eqref{ant_blim}$ represent the binary selection decisions for RF chains and antenna elements, respectively. Constraint $\eqref{RF_numlim}$ indicates the limitation of the HBF architecture that the number of data streams is fewer than the number of operating RF chains, while the number of selected RF chains is generally smaller than the number of operating antennas. $\eqref{SR_req}$ guarantees that the minimum QoS requirement of $R_{req}$ is satisfied at the HBF-enabled receiving BS. Constraint $\eqref{P_lim}$ limits the system power to the maximum allowable system power $P_{max}$. In $\eqref{RF_insertionloss}$, we consider that each RF chain can support a maximum power of $P_{o}$ associated with the maximum number of connected receiving antenna elements. The circuit insertion loss $\beta$ is also considered at the receiver BS. Constraint $\eqref{con_RF}$ is a modulus constraint for the analog beamformer. We can observe that the beamformer designs of $\mathbf{W_{RF}}$ and $\mathbf{W_{BB}}$ will induce a complex problem. We observe that problem $\eqref{mainprob}$ is complex owing to the coupled variables of continuous beamformers and the discrete RF/antenna selection parameters. Therefore, we design an MP-empowered RF/antenna selection scheme under an LDPC-based HBF connection, which is followed by the heuristic beamformer scheme.


\section{Design of an LDPC-based HBF Connection}
	Inspired by the parity-check property in coding theory, we design an LDPC-based HBF connection $\mathbf{C}$ between RF chains and antennas, which is partially linked with the following properties \cite{46}:
	\begingroup
	\allowdisplaybreaks
\begin{subequations} \label{LDPC}
  \begin{align}
       &\sum_{m=1}^{N_{ant}} {[\mathbf{C}]}_{nm} = {N}_{Conn} \overset{\underset{\mathrm{(a)}}{}}{>} \left\lceil \frac{N_{ant}}{N_{RF}} \right\rceil, \quad \forall n , \label{LDPC1}\\
       &\sum_{n=1}^{N_{RF}} {[\mathbf{C}]}_{nm} \geq 1, \quad \forall m, \label{LDPC2} \\
       & \sum_{n=1}^{N_{RF}} \sum_{ n'=n+1}^{N_{RF}} \mathbbm{1}\left( \chi_{nn'}\geq 1 \right) \geq N_{RF}-1, \label{LDPC3}
  \end{align}
\end{subequations}
\endgroup
where ${N}_{Conn}$ indicates the number of antennas connected to an RF chain and $\chi_{nn'}=\sum_{m=1}^{N_{ant}} \mathbbm{1}\Big( \left( [\mathbf{C}]_{nm}+[\mathbf{C}]_{n'm} \right) \geq 2\Big)$ denotes the number of commonly paired antenna nodes. The ceiling operation $\left\lceil \cdot \right\rceil$ prevents noninteger values when $\varrho=\text{mod}(N_{ant},N_{RF})\neq 0$, and inequality (a) in $\eqref{LDPC1}$ guarantees the LDPC property that there exists at least a direct or an indirect link between two arbitrary nodes of RF chains or antennas. Notably, constraint $\eqref{LDPC3}$ is different from the original parity-check mechanism, which ensures interconnection among RF/antenna nodes, as proven in the following lemma.
\begin{definition} \label{Def1}
Given $x$ nodes, we only require $(x-1)$ edges to establish connections among them. The worst case is that nodes $1$ and $x$ are capable of communicating via the maximum $(x-1)$ hops, while the best case with the minimum distance is a single hop between any of the neighboring nodes.
\end{definition}
\begin{lemma} \label{lm1}
	Considering the LDPC-based connection scheme in $\eqref{LDPC}$, there exists at least a direct or an indirect link between two arbitrary nodes of RF chains or antennas.
\end{lemma}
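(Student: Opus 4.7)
The plan is to view the LDPC-based connection as a bipartite graph $G$ whose two vertex classes are the RF-chain set $\mathcal{R}$ and the antenna set $\mathcal{A}$, with an edge $(n,m)$ whenever $[\mathbf{C}]_{nm}=1$. Proving the lemma then amounts to showing that $G$ has a single connected component, so that any two vertices (RF or antenna) are joined by some path. I would decompose this into two sub-claims: (i) every antenna vertex is adjacent to at least one RF vertex; and (ii) the RF vertices all lie in one component, using antenna vertices as two-hop relays.

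Sub-claim (i) is immediate from constraint (6b), which forces $\sum_{n=1}^{N_{RF}}[\mathbf{C}]_{nm}\geq 1$ for every $m$, so no antenna is isolated. For (ii), I would introduce an auxiliary graph $H$ on vertex set $\mathcal{R}$ with an edge between $n$ and $n'$ whenever $\chi_{nn'}\geq 1$, i.e. whenever RF chains $n,n'$ share a common antenna neighbor. If $H$ is connected, then every $H$-edge lifts to a length-two path $n \to m \to n'$ in $G$, and concatenating such lifts with the antenna-to-RF edge guaranteed by (i) yields an explicit $G$-path between any two chosen nodes. Hence it suffices to show that $H$ is connected. Here constraint (6c) gives $|E(H)|\geq N_{RF}-1$, while constraint (6a) with $N_{Conn}>\lceil N_{ant}/N_{RF}\rceil$ forces a pigeonhole overflow, since the total number of RF-to-antenna links exceeds $N_{ant}$, which is what makes enough of the $\chi_{nn'}$ nonzero to realize the count required by (6c).

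The main obstacle is the final step: the bound $|E(H)|\geq N_{RF}-1$ is the \emph{minimum} number of edges needed to connect $N_{RF}$ vertices (Definition 1), but a raw edge count does not by itself rule out disconnected configurations such as a small cycle plus isolated nodes. I would close this gap by appealing to the LDPC construction of $\mathbf{C}$: the parity-check design distributes the sharing pairs across all $N_{RF}$ RF chains, so that the $N_{RF}-1$ guaranteed pairs form a spanning subgraph of $H$ in the sense of Definition 1 (in the worst case a path of $N_{RF}-1$ hops, in the best case a star of single hops). Once $H$ is argued to be a spanning, connected structure, claims (i) and (ii) combine to give the required direct-or-indirect path between any two RF/antenna nodes, completing the proof.
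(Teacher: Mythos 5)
Your bipartite-graph formalization is cleaner than the paper's own argument, and sub-claim (i) together with the reduction to connectivity of the overlap graph $H$ is exactly the right skeleton; the paper instead argues constructively, starting from the disjoint partition $[\mathbf{C}]_{nm}=1$ for $(n-1)\lceil N_{ant}/N_{RF}\rceil+1\le m\le n\lceil N_{ant}/N_{RF}\rceil$, adding the extra links afforded by inequality (a) of \eqref{LDPC1}, and then invoking \eqref{LDPC3} and Definition~\ref{Def1}. However, the final step of your proof is a genuine gap, and you have correctly located it yourself: connectivity of $H$ does not follow from $|E(H)|\ge N_{RF}-1$, and it also does not follow from \eqref{LDPC1}--\eqref{LDPC3} combined. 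Concretely, take $N_{RF}=4$, $N_{ant}=12$, $N_{Conn}=4>\lceil 12/4\rceil$, with RF chains $1,2,3$ covering antennas $1$--$8$ so that they pairwise share an antenna, and RF chain $4$ connected only to antennas $9$--$12$. All of \eqref{LDPC1}, \eqref{LDPC2} and \eqref{LDPC3} then hold (the indicator sum equals $3=N_{RF}-1$), yet chain $4$ and its private antennas form a separate component. So the sentence ``the parity-check design distributes the sharing pairs across all $N_{RF}$ RF chains'' is precisely the claim that needs proof, and it cannot be extracted from the constraints alone.

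The missing ingredient is the Node Dependency Construction phase of Algorithm~\ref{alg:LDPC}: starting from a randomly chosen RF chain $n$, each of the $N_{RF}-1$ iterations picks a previously unselected chain $n'$, forces a common antenna between $n$ and $n'$, and then advances $n\leftarrow n'$. This explicitly builds a Hamiltonian path in your graph $H$ (the worst case of Definition~\ref{Def1}), which is what makes the $N_{RF}-1$ edges counted by \eqref{LDPC3} form a spanning tree rather than, say, a small cycle plus isolated vertices. If you replace the appeal to ``the LDPC construction'' with the explicit invariant that after each iteration the set of already-selected RF chains induces a connected subgraph of $H$, your argument closes completely and is in fact tighter than the paper's own proof, which overstates \eqref{LDPC3} as implying that \emph{any} two RF chains share a common antenna.
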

\begin{proof}
We start from ${N}_{Conn}=\left\lceil\frac{N_{ant}}{N_{RF}}\right\rceil$, which implies that at least independent links can be constructed considering constraint $\eqref{LDPC2}$, e.g., ${[\mathbf{C}]}_{nm}=1$ when $(n-1)\cdot \left\lceil\frac{N_{ant}}{N_{RF}}\right\rceil + 1\leq m\leq n\cdot \left\lceil\frac{N_{ant}}{N_{RF}}\right\rceil$. 
Now, we provide an additional link as ${N}_{Conn}=\left\lceil\frac{N_{ant}}{N_{RF}}\right\rceil+1$, which indicates that there will exist $N_{ant}-\left\lceil\frac{N_{ant}}{N_{RF}}\right\rceil$ antenna selection cases. Therefore, according to $\eqref{LDPC3}$, any two arbitrary RF chains should have at least a single common connected antenna, i.e., $\mathbbm{1}\Big( \left(\mathbf{C}_{nm}+\mathbf{C}_{n'm}\right)\geq 2\Big)$ when $n\neq n'$. Moreover, based on Definition \ref{Def1} and the last inequality of $\eqref{LDPC3}$, we are able to guarantee that there exists at least a direct or an indirect link between two arbitrary nodes of RF chains or antennas. This completes the proof.
\end{proof}
Based on Lemma \ref{lm1}, we observe from $\eqref{LDPC}$ that higher spatial diversity is attained when ${N}_{Conn}$ grows and achieves the full connection architecture if ${N}_{Conn}=N_{ant}$ holds. In contrast, when $N_{Conn} = \left\lceil \frac{N_{ant}}{N_{RF}} \right\rceil$, we have a basic architecture of the partially connected method. Compared to the fully connected structure of HBF, the designed LDPC-based HBF architecture can provide higher flexibility and more degree of freedom. Moreover, under the established architecture, all nodes of RF chains and antennas will be dependent according to Lemma \ref{lm1}, i.e., information can be conveyed from one node to another within finite time. The construction algorithm of the proposed LDPC-based HBF connection\textsuperscript{\ref{note4}}\footnotetext[4]{We can implement the proposed LDPC-based connection by initializing a configuration of fully connected hybrid beamformers. After obtaining the optimal LDPC connection in Algorithm \ref{alg:LDPC}, we are able to permanently turn off the unused links between RF chains and antennas, leaving the remaining elements operable. To elaborate a little further, the LDPC connection can be regarded as a long-term mechanism compared to the short-term element selection dealing with a small-scale fading channel. Leveraging such a scheme requires further complex analysis and design, which is outside the scope of this paper and can be left as future promising work.\label{note4}} between RF chains and antennas is demonstrated in Algorithm \ref{alg:LDPC}. In Fig. \ref{example_ld}, we provide an example of the construction of the proposed LDPC-based structure for the MP scheme. As shown in Fig. \ref{ld1} following Algorithm \ref{alg:LDPC}, we first establish $N_{Conn}-1=2$ links (red solid lines) between RF chains and antenna nodes to satisfy $\eqref{LDPC2}$, while an additional link (blue dotted lines) per RF chain satisfying $\eqref{LDPC1}$ for $N_{Conn}=3$ is also established. The constraint $\eqref{LDPC3}$ is also guaranteed with $\chi_{12}=2$, $\chi_{23}=1$, and $\chi_{13}=1$, i.e., $\sum_{n=1}^{3} \sum_{ n'=n+1}^{3} \mathbbm{1}\left( \chi_{nn'}\geq 1 \right) = 3$. Therefore, messages can be conveyed to arbitrary nodes (green arrows). On the other hand, if the number of antennas is an integer multiple of the number of RF chains, as exemplified in Fig. \ref{ld2}, an independent association between arbitrary nodes will be induced in the first-phase construction. For example, a message from antenna 6 cannot be passed to antenna 4 via red paths. To address this, additional blue paths obeying $\eqref{LDPC3}$ provide connections among arbitrary nodes, e.g., messages between antennas 4 and 6 can therefore be perfectly delivered but with more hops.

\begin{figure}
    \centering
    \subfigure[]{\includegraphics[width=1.4 in]{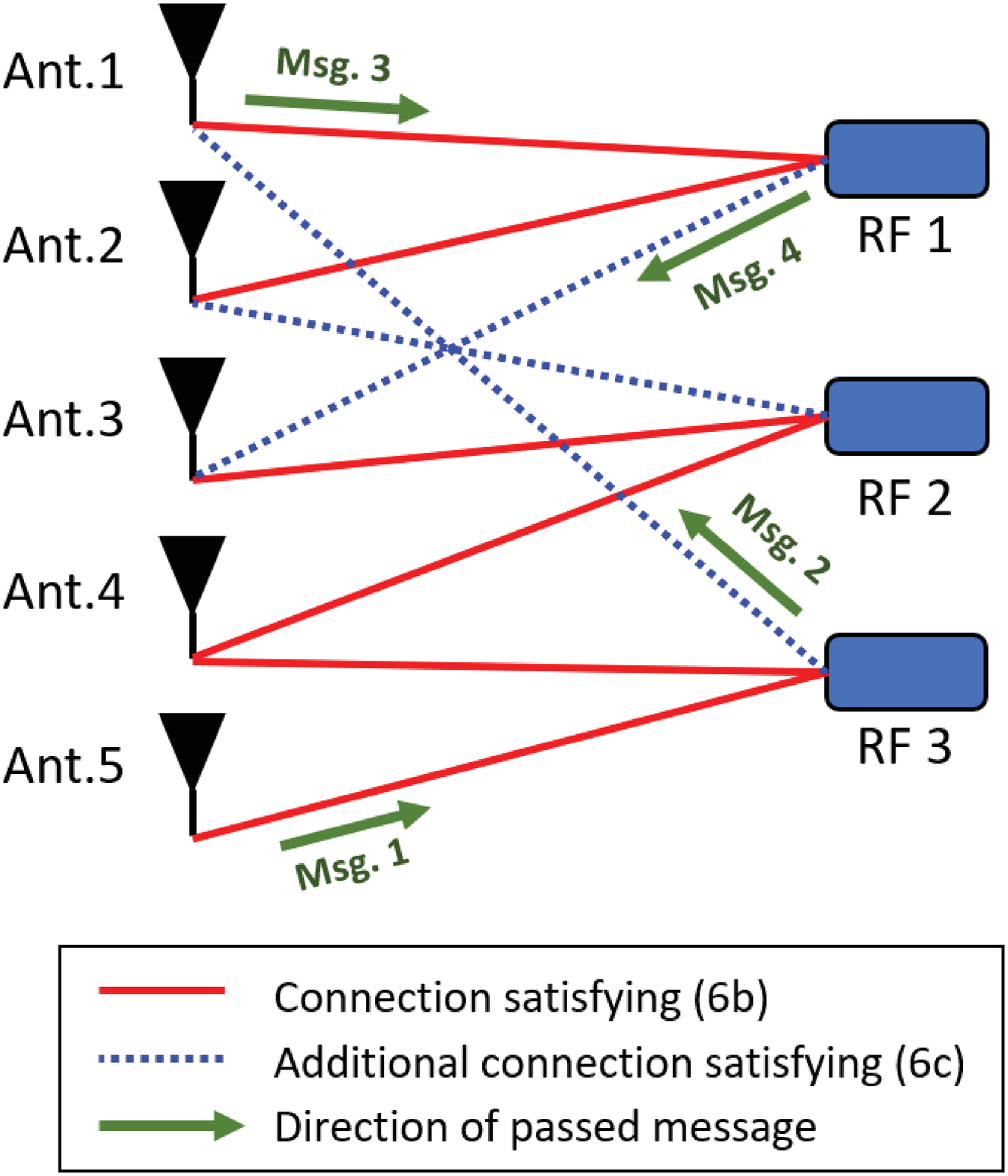} \label{ld1}}
    \subfigure[]{\includegraphics[width=2 in]{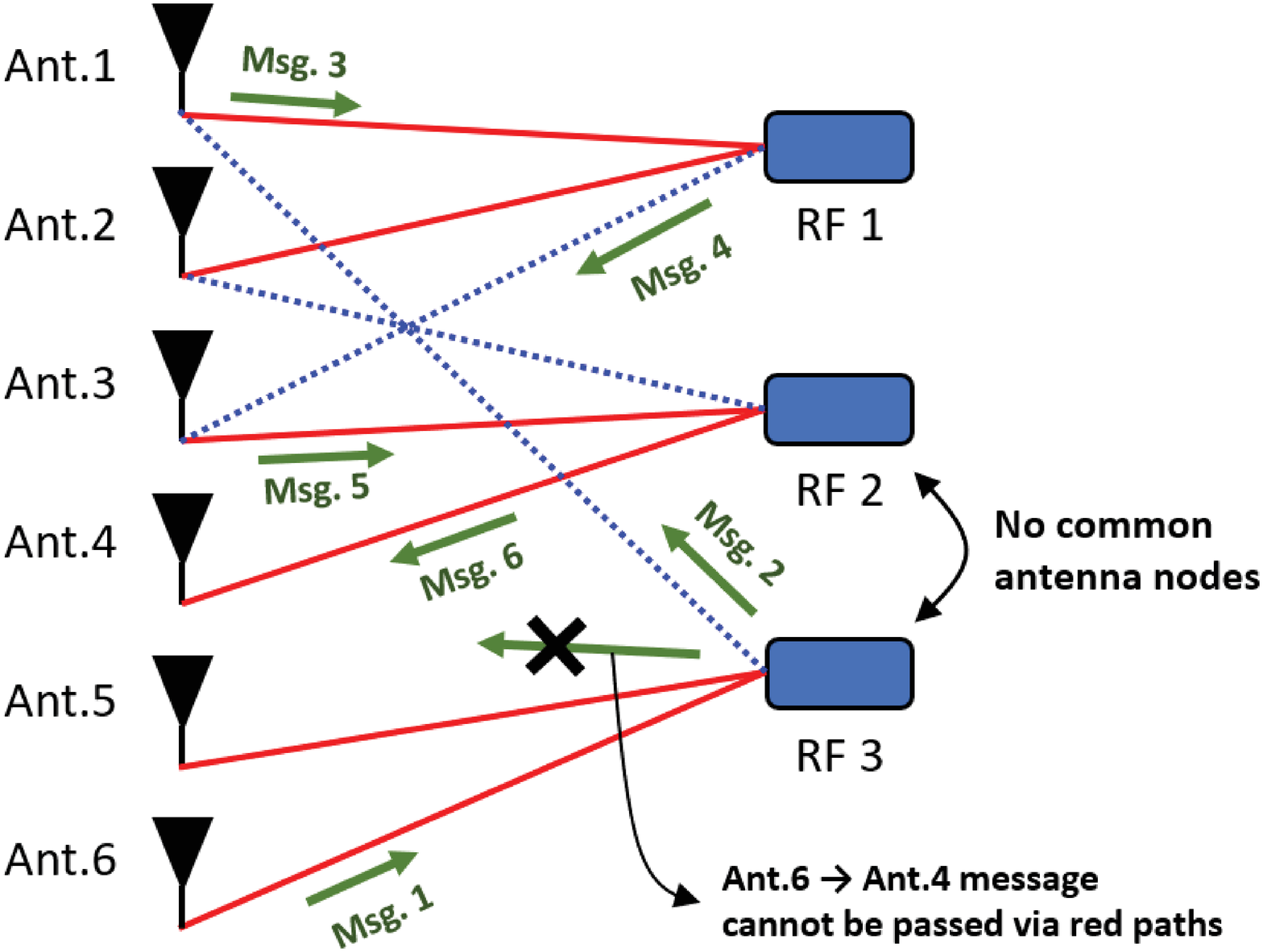} \label{ld2}}
    \caption{Example of the proposed LDPC-based structure for MP considering $\varrho=\text{mod} (N_{ant}, N_{RF})$ with $N_{ant}\in\{5,6\}$ antenna nodes and $N_{RF}=3$ RF chain nodes. (a) $\varrho\neq 0$ and $N_{ant}=5$; (b) $\varrho=0$ and $N_{ant}=6$. Ant. and Msg. are acronyms for antenna and message, respectively.} \label{example_ld}
\end{figure}   

	To manage RF chains and antenna elements in the receiving BS, we consider that the HBF architecture is divided into several distributed virtual controllers as RF and antenna controllers. That is, several RF chains and antennas are clustered and controlled by their corresponding controllers, which are depicted in Fig. \ref{system}. We use the term controller to refer to a virtual controller in the remaining content for simplicity. In the LDPC-based connected HBF architecture, we deploy $N_{RF}^{C}$ RF chain controllers as $ \mathcal{R}^{\mathcal{C}} \in \{ \mathcal{R}_{1},...,\mathcal{R}_{k},...,{\mathcal{R}}_{N_{RF}^{C}}\}$ and $N_{ant}^{C}$ antenna controllers as $\mathcal{A}^{\mathcal{C}} \in \{\mathcal{A}_{1},...,\mathcal{A}_{l},...,{\mathcal{A}}_{N_{ant}^{C}}\}$, where ${\mathcal{R}}_{k}$ and ${\mathcal{A}}_{l}$ denote the control units of the $k$-th RF chain controller and the $l$-th antenna controller, respectively. Without loss of generality, we assume that antennas and RF chains cannot be managed by more than one controller, i.e., ${\mathcal{R}}_{k} \cap  {\mathcal{R}}_{k'} \!=\! \emptyset, \forall k' \!\neq\! k,$ and ${\mathcal{A}}_{l} \cap  {\mathcal{A}}_{l'} \!=\! \emptyset, \forall l'\!\neq\! l$. We assume that RF controller $\mathcal{R}_k$ manages $N_{RF,k}$ RF chains and antenna controller ${\mathcal{A}}_{l}$ manages $N_{ant,l}$ antenna elements. Due to the designed partial HBF structure, each control unit obtains only partial knowledge from neighboring linked nodes and its own determination. Therefore, we define that the $k$-th RF chain controller possesses an RF decision set $\boldsymbol{\delta}_{\mathcal{R}_k}\!=\! \{ \delta_{\mathcal{R}_k,1},...,\delta_{\mathcal{R}_k,N_{RF,k}} \}$ and an antenna decision set from the $l$-th connected antenna controller $\boldsymbol{\theta}_{\mathcal{R}_k, l}\!=\! \{ \theta_{\mathcal{R}_k,l,1},...,\theta_{\mathcal{R}_k,l,N_{ant,l}} \}$. Similarly, the $l$-th antenna controller has its own antenna decision set $\boldsymbol{\theta}_{\mathcal{A}_l} \!=\! \{ \theta_{\mathcal{A}_l,1},...,\theta_{\mathcal{A}_l,N_{ant,l}} \}$ and RF selection decision set from the neighboring RF controller $k$, namely, $\boldsymbol{\delta}_{\mathcal{A}_l,k}\!=\! \{ \delta_{\mathcal{A}_l,k,1},...,\delta_{\mathcal{A}_l,k,N_{RF,k}} \}$. The policy subsets at each controller are included in the total sets of RF and antenna selections as $\{\boldsymbol{\delta}_{\mathcal{R}_k}, \boldsymbol{\delta}_{\mathcal{A}_l, k}, \forall k,l\} \in \boldsymbol{\delta}$ and $\{\boldsymbol{\theta}_{\mathcal{A}_l}, \boldsymbol{\theta}_{\mathcal{R}_k,l}, \forall k,l\} \in \boldsymbol{\theta}$, respectively. 

\begin{algorithm}[!tb]
\scriptsize
\caption{LDPC-based HBF connection}
\SetAlgoLined
\DontPrintSemicolon
\label{alg:LDPC}
\begin{algorithmic}[1]
\STATE Initialization: $\mathcal{A}$, $\mathcal{R}$, $N_{Conn}$\\
\STATE \textbf{(Independent Connection Establishment)}
\STATE A temporary non-established connection set is selected as $\mathcal{M}=\mathcal{A}$
\FOR{$n =1,...,N_{RF}$}
		\STATE A subset $\mathcal{M}'$ with $\left\lceil \frac{N_{ant}}{N_{RF}} \right\rceil$ elements from $\mathcal{M}$ is randomly selected
		\STATE A binary vector indexed by subset $\mathcal{M}'$ is generated based on $\eqref{LDPC1}$, i.e., we have $[\mathbf{C}]_{nm}=1$ for $m\in \mathcal{M}'$ with the remaining elements being zero for $m \notin \mathcal{M}'$
		\STATE The non-established connection set is updated as $\mathcal{M} \leftarrow \mathcal{M} - \mathcal{M}'$
\ENDFOR
\STATE \textbf{(Node Dependency Construction)}
	\STATE An initial pairing RF candidate $n\in\mathcal{R}$ is randomly selected
	\STATE A temporary non-selected RF set is selected as $\mathcal{N}=\mathcal{R}\backslash \{n\}$ 
\FOR{$iter =1,...,N_{RF}-1$}
	\STATE A pairing RF $n'\in\mathcal{N}\backslash \{n\}$ is randomly selected
	\STATE A common antenna node $m$ with  $\mathbf{C}_{n'm}=1$ is chosen, and $\mathbf{C}_{nm}=1$ is set
	\STATE The non-selected RF set is updated as $\mathcal{N}\leftarrow \mathcal{N}\backslash \{n'\}$ and $n=n'$	
\ENDFOR
\end{algorithmic}
\end{algorithm}

\section{Proposed Message Passing Antenna and RF Chain Selection (MARS) Scheme}
	Due to its high computational complexity, the original problem in \eqref{mainprob} is decomposed into two subproblems, i.e., RF chain selection and antenna selection, which are determined by RF and antenna controllers, respectively. Each controller obtains its own solution according to the messages passed from other controllers, which can potentially reduce the system complexity. The subproblem of RF selection for the $k$-th RF chain controller is written as
\begin{subequations} \label{RFprob1}
  \begin{align}
        &\min_{\boldsymbol{\delta}_{\mathcal{R}_{k}}}\quad P \left(\boldsymbol{\delta}_{\mathcal{R}_{k}} \right) \label{RFp1}\\
        &\text{s.t.}\quad \mbox{Fixed } \boldsymbol{\theta}_{\mathcal{R}_k,l}, \boldsymbol{\delta}_{\mathcal{R}_{k'}}, \forall l,k' \not \neq k,  \label{fix1} \\ 
        &\quad\quad \eqref{RF_blim}, \eqref{RF_numlim}, \eqref{SR_req}, \eqref{P_lim}, \label{conrf}
  \end{align}
\end{subequations}
    where $\eqref{RFp1}$ indicates the power consumption of the $k$-th RF chain given by the passed information of $\boldsymbol{\theta}_{\mathcal{R}_k,l}$ from the $l$-th antenna controller and the message of $\boldsymbol{\delta}_{\mathcal{R}_{k'}}$ from the neighboring $(k')$-th RF chain in $\eqref{fix1}$. Furthermore, the subproblem of antenna selection for the $l$-th RF chain controller is given by
\begin{subequations} \label{antprob1}
  \begin{align}
        &\min_{\boldsymbol{\theta}_{\mathcal{A}_{l}}}\quad P \left(\boldsymbol{\theta}_{\mathcal{A}_{l}} \right) \label{antp1}\\
        &\text{s.t.} \quad \mbox{Fixed } \boldsymbol{\delta}_{\mathcal{A}_l,k}, \boldsymbol{\theta}_{\mathcal{A}_{l'}}, \forall k, l'\neq l, \label{fix2} \\
        &\quad\quad\eqref{ant_blim}, \eqref{RF_numlim}, \eqref{SR_req}, \eqref{P_lim}, \eqref{RF_insertionloss}, \label{conant} 
  \end{align}
\end{subequations}
where $\eqref{antp1}$ represents the total power consumption of the $l$-th antenna controller given by the passed information of $\boldsymbol{\delta}_{\mathcal{A}_l,k}$ from the $k$-th RF controller and the message of $\theta_{\mathcal{A}_{l'}}$ from the neighboring $(l')$-th antenna controller in $\eqref{fix2}$. To address the two subproblems of RF/antenna selection, we propose the MARS scheme to minimize the operational circuit power of antennas and RF chains of receiving hybrid beamformers. Two types of MARS schemes are designed with sequential and parallel message passing, denoted by MARS-S and MARS-P, respectively. MARS can be iteratively performed in either a centralized or distributed manner based on computational complexity and algorithm convergence considerations. In a centralized architecture, more RF/antenna nodes are managed by an RF/antenna controller than in a distributed system. In the following, we will elaborate the MARS schemes of MARS-S and MARS-P.

\subsection{MARS-S for Sequential Message Passing}
	In the proposed sequential mechanism of MARS-S, the antenna/RF controller passes the determined solutions in a sequential manner. As a result, the controllers are guaranteed to acquire the latest information from their neighboring nodes. The overall procedure of MARS-S is shown in Fig. \ref{exp1}, including four processing steps, which are elaborated in the following.
\subsubsection{Initialization}
	At the beginning, all RF chain and antenna controllers exchange the determined beamformer matrices of $\mathbf{W_{BB}}$ and $\mathbf{W_{RF}}$ and channel information $\mathbf{H}$ to compute the achievable rate via $\eqref{recvcapacity}$. Furthermore, all RF/antenna selection elements in $\boldsymbol{\delta}$ and $\boldsymbol{\theta}$ are randomly generated in $\{0,1\}$.
	
\subsubsection{Update Based on Received Messages}
Afterwards, the RF/antenna controllers will update the latest information on RF/antenna selections based on the received messages from the neighboring controllers. According to $\eqref{RFp1}$, the update information of the $k$-th RF chain controller at the $t$-th update consists of the $(k')$-th RF chain decision $\boldsymbol{\delta}_{\mathcal{R}_{k'}}, \forall k'\neq k$ and the passed antenna selection $\boldsymbol{\theta}_{\mathcal{R}_k,l}, \forall l$ at the $(t-1)$-th iteration, which are given by \eqref{para_update_RF} at top of next page,
\begin{figure*}
\begin{subequations} \label{para_update_RF}
    \begin{align}
	\boldsymbol{\delta}_{ {\mathcal{R}}_{k'} }^{(t)} \!=\!& \bigvee_{\mathcal{A}_l \in \mathcal{C}_{\mathcal{R}_{k}}} \left[ \Xi_{\mathcal{R}} \land  \nu_{\mathcal{A}_{l} \rightarrow \mathcal{R}_{k}} \left( \boldsymbol{\delta}_{\mathcal{A}_l,{k'}}^{(t-1)} \right)\right] \land \left\lbrace \neg \left[ \bigvee_{\mathcal{A}_l \in \mathcal{C}_{\mathcal{R}_{k}}} \left( \Xi_{\mathcal{R}} \land \boldsymbol{\delta}_{\mathcal{R}_{k'}}^{(t-1)} \right)\right] \right\rbrace, \\    
    \boldsymbol{\theta}_{\mathcal{R}_{k},l}^{(t)} \!=\!& \bigvee_{\mathcal{A}_l \in \mathcal{C}_{\mathcal{R}_{k}}} \left[ \Xi_{\mathcal{R}}^{'} \land \nu_{\mathcal{A}_{l} \rightarrow \mathcal{R}_{k}} \left( \boldsymbol{\theta}_{\mathcal{A}_{l}}^{(t-1)} \right) \right] \land \left\lbrace \neg \left[ \bigvee_{\mathcal{A}_l \in \mathcal{C}_{\mathcal{R}_{k}}} \left( \Xi_{\mathcal{R}}^{'} \land \boldsymbol{\theta}_{\mathcal{R}_{k},l}^{(t-1)} \right) \right]\right\rbrace,
    \end{align}
\end{subequations}
\hrulefill
\end{figure*}
where $\Xi_{\mathcal{R}}=\nu_{\mathcal{A}_{l} \rightarrow \mathcal{R}_{k}} \left( \boldsymbol{\delta}_{\mathcal{A}_l,{k'}}^{(t-1)}\right)  \oplus \boldsymbol{\delta}_{\mathcal{R}_{k'}}^{(t-1)} $ and $\Xi_{\mathcal{R}}^{'} = \nu_{\mathcal{A}_{l} \rightarrow \mathcal{R}_{k}} \left( \boldsymbol{\theta}_{\mathcal{A}_{l}}^{(t-1)}\right)  \oplus \boldsymbol{\theta}_{\mathcal{R}_{k},l}^{(t-1)}$. The connection set of the $k$-th RF controller linked to its antenna controllers is denoted by $\mathcal{C}_{\mathcal{R}_{k}}$. Furthermore, we define $\nu_{\mathcal{A}_{l} \rightarrow \mathcal{R}_{k}}(\cdot)$ as the message passing operation that directs the path from the $l$-th antenna controller to the $k$-th RF chain controller. Similarly, according to $\eqref{antp1}$, the message of the $l$-th antenna controller at the $t$-th update consists of the $(l')$-th antenna selection $\boldsymbol{\theta}_{\mathcal{A}_{l'}}, \forall l'\neq l$ and the passed RF policy $\boldsymbol{\delta}_{\mathcal{A}_l,k}, \forall k$ at the $(t-1)$-th iteration, which are obtained as \eqref{para_update_antenna} at top of next page,
\begin{figure*}
\begin{subequations} \label{para_update_antenna}
    \begin{align}
	\boldsymbol{\theta}_{ {\mathcal{A}}_{l'} }^{(t)} \!=\!& \bigvee_{\mathcal{R}_{k} \in \mathcal{C}_{\mathcal{A}_{l}}} \left[ \Xi_{\mathcal{A}} \land  \mu_{\mathcal{R}_{k} \rightarrow \mathcal{A}_{l}} \left( \boldsymbol{\theta}_{\mathcal{R}_k,{l'}}^{(t-1)} \right)\right] \land \left\lbrace \neg \left[ \bigvee_{\mathcal{R}_k \in \mathcal{C}_{\mathcal{A}_{l}}} \left( \Xi_{\mathcal{A}} \land \boldsymbol{\theta}_{\mathcal{A}_{l'}}^{(t-1)} \right)\right] \right\rbrace, \\    
    \boldsymbol{\delta}_{\mathcal{A}_{l},k}^{(t)} \!=\!& \bigvee_{\mathcal{R}_k \in \mathcal{C}_{\mathcal{A}_{l}}} \left[ \Xi_{\mathcal{A}}^{'} \land \mu_{\mathcal{R}_{k} \rightarrow \mathcal{A}_{l}} \left( \boldsymbol{\delta}_{\mathcal{R}_{k}}^{(t-1)} \right) \right] \land \left\lbrace \neg \left[ \bigvee_{\mathcal{R}_k \in \mathcal{C}_{\mathcal{A}_{l}}} \left( \Xi_{\mathcal{A}}^{'} \land \boldsymbol{\delta}_{\mathcal{A}_{l},k}^{(t-1)} \right) \right]\right\rbrace,
    \end{align}
\end{subequations}
\hrulefill
\end{figure*}
where $\Xi_{\mathcal{A}}=\mu_{\mathcal{R}_{k} \rightarrow \mathcal{A}_{l}} \left( \boldsymbol{\theta}_{\mathcal{R}_k,{l'}}^{(t-1)}\right)  \oplus \boldsymbol{\theta}_{\mathcal{A}_{l'}}^{(t-1)} $ and $\Xi_{\mathcal{A}}^{'} = \mu_{\mathcal{R}_{k} \rightarrow \mathcal{A}_{l}} \left( \boldsymbol{\delta}_{\mathcal{R}_{k}}^{(t-1)}\right)  \oplus \boldsymbol{\delta}_{\mathcal{A}_{l},k}^{(t-1)}$. The notation $\mathcal{C}_{\mathcal{A}_{l}}$ indicates the connection set of the $l$-th antenna controller associated with its RF controllers. Moreover, $\mu_{\mathcal{R}_{k} \rightarrow \mathcal{A}_{l}}(\cdot)$ is defined as the message passing operation directing the path from the $k$-th RF controller to the $l$-th antenna controller. Benefiting from the MP mechanism, we observe from the update equations $\eqref{para_update_RF}$ and $\eqref{para_update_antenna}$ that the RF/antenna controllers can readily update all information of the other RF/antenna nodes via simple logic operations to obtain their optimal solution for RF/antenna selections.

\subsubsection{Optimization}
	Based on $\eqref{para_update_RF}$ and $\eqref{para_update_antenna}$, the RF/antenna controllers solve the decomposed subproblems in $\eqref{RFprob1}$ and $\eqref{antprob1}$ to obtain the optimum RF chain and antenna selections, which are given by
\begin{align}\label{optrf}
\boldsymbol{\delta}_{\mathcal{R}_{k}}^{*} = {\arg\!\min}_{\boldsymbol{\delta}_{\mathcal{R}_{k}}} P \left(\boldsymbol{\delta}_{\mathcal{R}_{k}} \right) \mbox{ with fixed  } \boldsymbol{\theta}_{\mathcal{R}_k,l}, \boldsymbol{\delta}_{\mathcal{R}_{k'}}, \forall l,k' \not \neq k,
\end{align}
and
\begin{align}\label{optant}
\boldsymbol{\theta}_{\mathcal{A}_{l}}^{*} = {\arg\!\min}_{\boldsymbol{\theta}_{\mathcal{A}_{l}}} P \left(\boldsymbol{\theta}_{\mathcal{A}_{l}} \right) \mbox{ with fixed  } \boldsymbol{\delta}_{\mathcal{A}_l,k}, \boldsymbol{\theta}_{\mathcal{A}_{l'}}, \forall k, l'\neq l,
\end{align}
respectively. All RF and antenna constraints should be obeyed in $\eqref{conrf}$ and $\eqref{conant}$, respectively. Due to the partial connection of the LDPC-based HBF architecture, each controller is capable of managing a small number of nodes, which can achieve a much lower complexity. Accordingly, we can readily employ a numerical brute-force method to obtain the optimal binary-based parameters, i.e., we can search all possible solutions for a group of nodes given the information passed from others. Therefore, each RF/antenna controller can obtain its respective optimal solutions for RF selection in $\eqref{optrf}$ and antenna selection in $\eqref{optant}$.

\begin{figure}
    \centering
\includegraphics[width=3.5 in]{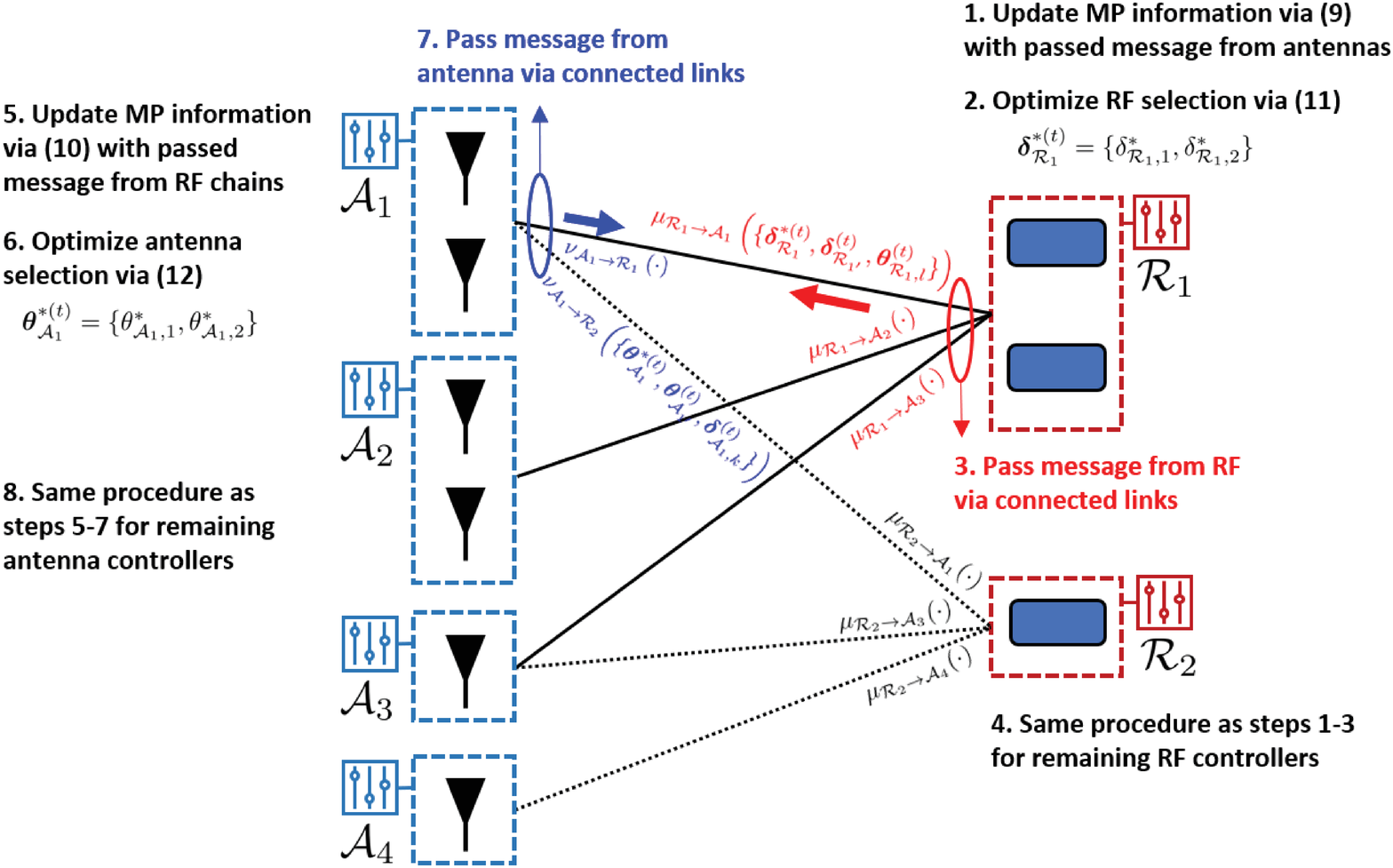}
    \caption{Example of sequential MP with the proposed MARS-S scheme considering $N_{ant}^C=4$ antenna controllers and $N_{RF}^C=2$ RF chain controllers.}\label{exp1}
\end{figure}

\subsubsection{Sequential Message Passing}
	After obtaining the optimal outcome of RF/antenna selection, the first controller passes its latest information to the neighboring connected nodes based on the LDPC-based HBF architecture. Then, the second controller managing its RF chains or antennas determines its own solution based on the information received from the neighboring controllers. That is, only one controller conducts MP while the others wait for the completion of message transfer. Therefore, the $k$-th RF controller employs the MP operation $\mu_{\mathcal{R}_{k} \rightarrow \mathcal{A}_{l}} \left( \{ \boldsymbol{\delta}_{\mathcal{R}_k}^{*(t)}, \boldsymbol{\delta}_{ {\mathcal{R}}_{k'} }^{(t)},  \boldsymbol{\theta}_{\mathcal{R}_{k},l}^{(t)}  \} \right)$ to pass its optimal RF selection $\boldsymbol{\delta}_{\mathcal{R}_k}^{*(t)}$ at the $t$-th iteration along with the updated information given in $\eqref{para_update_RF}$. Similarly, the $l$-th antenna controller transfers the optimal antenna selection and update information using $\nu_{\mathcal{A}_{l} \rightarrow \mathcal{R}_{k}} \left( \{ \boldsymbol{\theta}_{\mathcal{A}_l}^{*(t)}, \boldsymbol{\theta}_{ {\mathcal{A}}_{l'} }^{(t)}, \boldsymbol{\delta}_{\mathcal{A}_{l},k}^{(t)}\} \right)$. 
	 
The concrete scheme of the proposed MARS-S as a sequential MP is described in Algorithm \ref{alg:MARS-S}, which is also depicted in Fig. \ref{exp1}. All parameters are initialized randomly along with the measured channel information. First, the RF chain controller updates its decision according to the received message based on $\eqref{para_update_RF}$. We consider a randomized update method to prevent obtaining a locally optimal solution, i.e., $\eqref{optrf}$ is optimized when $x \leq \eta_{r}$, where $x$ is a random variable between $\left[0,1 \right]$ and $\eta_{r}$ is a predefined learning rate. After obtaining the temporary optimal RF selection $\boldsymbol{\delta}_{\mathcal{R}_{k}}^{*(t)}$ constrained by $\eqref{conrf}$, the $k$-th RF chain controller passes the decision to the neighboring antenna controllers and updates the set $\{ \boldsymbol{\delta}_{\mathcal{R}_k}^{*(t)}, \boldsymbol{\delta}_{ {\mathcal{R}}_{k'} }^{(t)},  \boldsymbol{\theta}_{\mathcal{R}_{k},l}^{(t)} \}$. The antenna controllers can perform their operations until the RF controllers have finished. Afterwards, the antenna controllers adopt a process similar to that of the RF chain controllers to update, optimize, and pass messages. The update of antennas is based on $\eqref{para_update_antenna}$, while the optimization follows $\eqref{optant}$ constrained by $\eqref{conant}$. The optimal solution can be acquired if the random value $x$ is smaller than the given learning rate $\eta_{a}$ for antenna selection. Then, the $l$-th antenna controller transfers the decision to the connected RF controllers and updates the set $\{ \boldsymbol{\theta}_{\mathcal{A}_l}^{*(t)}, \boldsymbol{\theta}_{ {\mathcal{A}}_{l'} }^{(t)}, \boldsymbol{\delta}_{\mathcal{A}_{l},k}^{(t)}\}$. Convergence occurs when the difference in power consumption between two iterations is smaller than a given threshold $\kappa$, i.e., $| P(\boldsymbol{\delta}^{(t-1)},\boldsymbol{\theta}^{(t-1)}) - P(\boldsymbol{\delta}^{(t)},\boldsymbol{\theta}^{(t)}) | \leq \kappa$.

\begin{algorithm}[!tb]
\scriptsize
\caption{Proposed MARS-S Scheme}
\SetAlgoLined
\DontPrintSemicolon
\label{alg:MARS-S}
\begin{algorithmic}[1]
\STATE Initialization: $\mathbf{W_{BB}}, \mathbf{W_{RF}}, \mathbf{H}, \mathbf{C}, \boldsymbol{\delta}, \boldsymbol{\theta}, \eta_{r}, \eta_{a}, \kappa, t=1$
\STATE Channel estimation is performed to obtain $\mathbf{H}$ 
\STATE The beamformers $\mathbf{W_{BB}}, \mathbf{W_{RF}}$ are randomized 
\STATE An LDPC-based connection is obtained based on Algorithm \ref{alg:LDPC}
\STATE The BS broadcasts the above information to each controller
\REPEAT
\STATE {\textbf{(RF Chain Controller)}} \\
\FOR{$\mathcal{R}_{k}\in \mathcal{R}^{\mathcal{C}}$}
\STATE  The $k$-th RF controller's information is updated based on $\eqref{para_update_RF}$
\STATE A number $x \in [0,1]$ is randomly generated
\IF{$x \leq \eta_{r}$}
\STATE The temporary optimized RF chain selection  $\boldsymbol{\delta}_{\mathcal{R}_{k}}^{*(t)}$ is obtained according to $\eqref{optrf}$
\STATE RF chains are reselected if $\eqref{conrf}$ is not satisfied
\ENDIF
\STATE The updated and optimized message is passed to neighboring connected controllers as $\mu_{\mathcal{R}_{k} \rightarrow \mathcal{A}_{l}} \left( \{ \boldsymbol{\delta}_{\mathcal{R}_k}^{*(t)}, \boldsymbol{\delta}_{ {\mathcal{R}}_{k'} }^{(t)},  \boldsymbol{\delta}_{\mathcal{A}_{l},k}^{(t)}  \} \right)$
\ENDFOR
\STATE {\textbf{(Antenna Controller)}} \\
\FOR{$\mathcal{A}_{l}\in \mathcal{A}^{\mathcal{C}}$}
\STATE The $l$-th antenna controller's information is updated according to $\eqref{para_update_antenna}$
\STATE A random number $x \in [0,1]$ is generated
\IF{$x \leq \eta_{a}$}
\STATE The temporary optimized antenna selection  $\boldsymbol{\theta}_{\mathcal{A}_{l}}^{*(t)}$ is derived according to $\eqref{optant}$
\STATE If $\eqref{conant}$ is not satisfied, antennas are reselected
\ENDIF
\STATE The updated and optimized message is passed to neighboring connected controllers as $\nu_{\mathcal{A}_{l} \rightarrow \mathcal{R}_{k}} \left( \{ \boldsymbol{\theta}_{\mathcal{A}_l}^{*(t)}, \boldsymbol{\theta}_{ {\mathcal{A}}_{l'} }^{(t)}, \boldsymbol{\delta}_{\mathcal{A}_{l},k}^{(t)}\} \right)$
\ENDFOR
\STATE {\textbf{(Optimum Derivation)}}
\STATE The optimum of RF/antenna selection is $\{ \boldsymbol{\delta}^{*},\boldsymbol{\theta}^{*} \}= \{\boldsymbol{\delta}^{(t)},\boldsymbol{\theta}^{(t)}\}$
\STATE Iteration update $t=t+1$
\UNTIL{Convergence of $| P(\boldsymbol{\delta}^{(t-1)},\boldsymbol{\theta}^{(t-1)}) - P(\boldsymbol{\delta}^{(t)},\boldsymbol{\theta}^{(t)}) | \leq \kappa$}
\end{algorithmic}
\end{algorithm}

\subsection{MARS-P for Parallel Message Passing}

\begin{figure}
    \centering
\includegraphics[width=3.3 in]{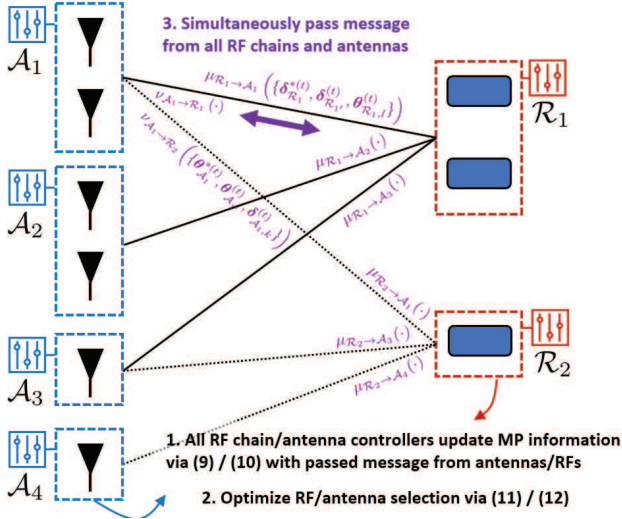}
    \caption{Example of parallel MP with the proposed MARS-P scheme considering $N_{ant}^C=4$ antenna controllers and $N_{RF}^C=2$ RF chain controllers.}\label{exp2}
\end{figure}

\begin{algorithm}[!tb]
\scriptsize
\caption{Proposed MARS-P Scheme}
\SetAlgoLined
\DontPrintSemicolon
\label{alg:MARS-P}
\begin{algorithmic}[1]
\STATE Initialization: $\mathbf{W_{BB}}, \mathbf{W_{RF}}, \mathbf{H}, \mathbf{C}, \boldsymbol{\delta}, \boldsymbol{\theta}, \eta_{r}, \eta_{a}, \kappa, t=1$
\STATE Channel estimation is performed to obtain $\mathbf{H}$
\STATE The beamformers $\mathbf{W_{BB}}, \mathbf{W_{RF}}$ are randomized
\STATE An LDPC-based connection is obtained based on Algorithm \ref{alg:LDPC}
\STATE The BS broadcasts the above information to each controller
\REPEAT
\STATE {\textbf{(RF Chain Controller)}} \\
\STATE  All RF controllers $\mathcal{R}_{k} \in \mathcal{R}^{\mathcal{C}}$ update their information according to $\eqref{para_update_RF}$
\STATE A temporary optimized RF chain selection $\boldsymbol{\delta}_{\mathcal{R}_{k}}^{*(t)}$ is acquired according to $\eqref{optrf}$ when the generated number of $x\in [0,1]$ is smaller than $\eta_{r}$
\STATE RF chains are reselected if $\eqref{conrf}$ is not satisfied
\STATE {\textbf{(Antenna Controller)}} \\
\STATE All antenna controllers $\mathcal{A}_{l} \in \mathcal{A}^{\mathcal{C}}$ update their information based on $\eqref{para_update_antenna}$
\STATE A temporary optimized antenna selection result is acquired $\boldsymbol{\theta}_{\mathcal{A}_{l}}^{*}$ in $\eqref{optant}$ when the generated number $x\in [0,1]$ is smaller than $\eta_{a}$
\STATE Antennas are reselected if $\eqref{conant}$ is not satisfied
\STATE {\textbf{(Simultaneous Message Passing)}} \\
\STATE All RF/antenna controllers simultaneously transfer the decision and latest message to the connected antenna/RF controllers by MP: $\mu_{\mathcal{R}_{k} \rightarrow \mathcal{A}_{l}} \left( \{ \boldsymbol{\delta}_{\mathcal{R}_k}^{*(t)}, \boldsymbol{\delta}_{ {\mathcal{R}}_{k'} }^{(t)},  \boldsymbol{\delta}_{\mathcal{A}_{l},k}^{(t)}  \} \right)$ and
$\nu_{\mathcal{A}_{l} \rightarrow \mathcal{R}_{k}} \left( \{ \boldsymbol{\theta}_{\mathcal{A}_l}^{*(t)}, \boldsymbol{\theta}_{ {\mathcal{A}}_{l'} }^{(t)}, \boldsymbol{\delta}_{\mathcal{A}_{l},k}^{(t)}\} \right), \forall \mathcal{R}_{k} \in \mathcal{R}^{\mathcal{C}}, \forall \mathcal{A}_{l} \in \mathcal{A}^{\mathcal{C}}$
\STATE {\textbf{(Optimum Derivation)}}
\STATE The optimum of RF/antenna selection is $\{ \boldsymbol{\delta}^{*},\boldsymbol{\theta}^{*} \}= \{\boldsymbol{\delta}^{(t)},\boldsymbol{\theta}^{(t)}\}$
\STATE Iteration update $t=t+1$
\UNTIL{Convergence of $| P(\boldsymbol{\delta}^{(t-1)},\boldsymbol{\theta}^{(t-1)}) - P(\boldsymbol{\delta}^{(t)},\boldsymbol{\theta}^{(t)}) | \leq \kappa$}
\end{algorithmic}
\end{algorithm}

	We can infer from MARS-S that it potentially induces a low convergence speed due to the sequential operation of MP. The adjacent controllers should wait until the completion of their connected controllers. Therefore, we design a parallel MP type, MARS-P, so that all RF/antenna controllers can simultaneously pass their optimized determinations. The steps of the proposed MARS-P scheme are similar to those of MARS-S for initialization, updating of $\eqref{para_update_RF}$ and $\eqref{para_update_antenna}$, and optimization of $\eqref{optrf}$ and $\eqref{optant}$ for RF and antenna selection, respectively. The only difference between MARS-P and MARS-S is the way in which parallel MP conveys information. The detailed algorithm of MARS-P is presented as Algorithm \ref{alg:MARS-P}, which enables all controllers to simultaneously transmit messages without waiting. An example to demonstrate the process of MARS-P is illustrated in Fig. \ref{exp2}. However, due to the simultaneous update in MARS-P, each controller potentially suffers from local decisions, leading to a low convergence rate. Therefore, similar to MARS-S, we also need to design a feasible learning rate parameter to strike a compelling tradeoff between convergence and local optimality.
	
	To elaborate a little further, the proposed MARS scheme can be used in both narrowband and wideband systems. There are two ways to modify the MARS algorithms for these systems. The first method is through an \textbf{averaging} approach, which involves estimating the wideband channel state information using a single parameter with the dimension of the multiplied numbers of transmitters and receivers. However, this approach results in lower complexity but worse performance due to the coarse channel estimation compared to fine-grained subchannel optimization. The second scheme is from a \textbf{subchannel}-based perspective, which provides better performance but is more complicated due to the coupled total power among subchannels. This may result in a trade-off among different subchannels, where selecting a certain subchannel may be harmful to another one. To address this issue, an alternative optimization approach can be designed by adding an additional iteration loop to iteratively search over each subchannel until convergence. This approach enables us to apply the existing MARS algorithm directly in a wideband scenario, but it may require additional computational complexity depending on the number of subchannels being operated. Moreover, advanced beamforming can be designed to further enhance the system performance, whereas this work can be regarded as a lower bound of joint optimization of both RF/antenna selection and beamformer design.

\subsection{Heuristic Hybrid Beamformer Solution}

After obtaining the selection policy, we proceed to conduct hybrid beamformer scheme which is designed based on continuous genetic algorithm \cite{cga}. Genetic-based hybrid beamforming algorithm is composed of a series of genetic process, including gene generation initialization, elite selection, crossover, and mutation. The detailed process is elaborated as follows. \textbf{Initial Gene Generation}: Initially, we randomly generate a gene set $\mathcal{X} = \left\lbrace \mathcal{X}_{1},...,\mathcal{X}_{g},..., \mathcal{X}_{N_{G}} \right\rbrace$, where $N_{G}$ is the population of the genes. Each gene $\mathcal{X}_{g}$ comprises a concatenating vector of candidate HBF solutions $\{ \mathbf{W}_{\mathbf{RF}}, \mathbf{W}_{\mathbf{BB}} \}$ given by $\mathcal{X}_{g} \in \mathbb{R}^{2(N_{RF} N_{ant} + N_S N_{RF})}$
\begin{align} \label{ggg}
	\mathcal{X}_{g} &= \left[ 
	\left(\text{Flat}(\mathfrak{R}\{ {\mathbf{W}_{\mathbf{RF}}} \}) \right)^T,
	\left(\text{Flat}(\mathfrak{I}\{ {\mathbf{W}_{\mathbf{RF}}} \}) \right)^T, \right. \notag \\	
	& \left. \qquad \left(\text{Flat}(\mathfrak{R}\{ {\mathbf{W}_{\mathbf{BB}}} \}) \right)^T,
	\left(\text{Flat}(\mathfrak{I}\{ {\mathbf{W}_{\mathbf{BB}}} \}) \right)^T
\right],
\end{align}	
where $\text{Flat}(\cdot)$ vectorizes a matrix into a column vector. We further notice that we generate $\vartheta \in [0, 2\pi)$ in $e^{j \vartheta}$ in order to satisfy the unit modulus equality constraint in \eqref{con_RF} for $\mathbf{W}_{\mathbf{RF}}$.

After obtaining the gene solution set, we proceed to conduct \textbf{Elite Gene Selection}. In elite selection, the fitness value will be evaluated in order to evaluate the effectiveness of gene solution. It can be observed from problem \eqref{mainprob} that we only need to meet the requirement of constraint \eqref{SR_req}. Therefore, the fitness function is designed as
\begin{align} \label{fitness}
	F (\mathcal{X}_{g}) \!=\! 
\begin{cases} 
R(\mathbf{\Delta^*,\Theta^*}, \mathcal{X}_{g}) \!-\! R_{req},  & \mbox{if } R(\mathbf{\Delta^*,\Theta^*}, \mathcal{X}_{g}) \!\geq\! R_{req}, \\
-\xi, & \mbox{otherwise,}
\end{cases}
\end{align}
where $\mathbf{\Delta^*,\Theta^*}$ are the RF/antenna selection solution obtained in MARS schemes in either Algorithm \ref{alg:MARS-S} or \ref{alg:MARS-P}, respectively. If not satisfying the constraint, it provides a penalty value with a sufficient large negative constant $-\xi$. Therefore, we select the top-$N_{E}$ genes with higher fitness values as the elite group, whereas the remaining unqualified genes are abandoned. During \textbf{Crossover}, we will generate new offspring genes from the elite group. We firstly randomly select two elite genes $\mathcal{X}_{g}$ and $\mathcal{X}_{g'}, \forall g'\neq g$. Then, we generate a random continuous sequence $\mathcal{S}$ with each element having a range of $[0,1]$. Note that the length of $\mathcal{S}$ is the same as that of $\mathcal{X}_{g}$. Accordingly, we can generate two offspring genes $\mathcal{X}_{\widetilde{g}}$ and $\mathcal{X}_{\widetilde{g}'}$ respectively as
\begin{align} \label{cross}
	\mathcal{X}_{\widetilde{g}} &= (\mathcal{X}_{g} \circ \mathcal{S}) + \Big( \mathcal{X}_{g'} \circ (1- \mathcal{S}) \Big), \\
	\mathcal{X}_{\widetilde{g}'} &= \Big( \mathcal{X}_{g} \circ (1-\mathcal{S}) \Big) + (\mathcal{X}_{g'} \circ  \mathcal{S}).
\end{align}	
A total of $N^B_{crx}$ new offspring genes will be generated from all possible combinations. In order to prevent local optimum solution, we proceed to perform \textbf{Mutation} operation. The concept of mutation is to alternate some elements of a gene. That is, a total of $N_{mu}^{B}$ elements will be selected for mutation. The values of those selected gene elements will be regenerated with the corresponding feasible ranges, i.e., the value regarding baseband beamforming is within a range of $\left[ \min\, \mathcal{X}_g(\mathbf{W_{BB}}) , \max \, \mathcal{X}_g(\mathbf{W_{BB}}) \right]$, whilst that regarding analog beamforming follows the range of $\left[ \min\, \mathcal{X}_g(\mathbf{W_{RF}}) , \max \, \mathcal{X}_g(\mathbf{W_{RF}}) \right]$. As for the next genetic generation, procedures from the steps of elite selection, crossover and mutation will be repeatedly performed until the termination criteria are satisfied, i.e., $\max \, F^{(\tau)}(\mathcal{X}_g) \leq \iota_{1}$ and $\left| \max \, F^{(\tau)}(\mathcal{X}_g) - \max \, F^{(\tau-1)}(\mathcal{X}_g)  \right| \leq \iota_{2}$. The first inequality guarantees the optimal fitness value lower than $\iota_1$. While, the second inequality indicates the the difference between the fitness values at the $\tau$-th and at the $(\tau-1)$-th generation should be sufficiently small. The overall algorithm is elaborated in Algorithm \ref{overall}.

\begin{algorithm}[!tb]
\scriptsize
\caption{Joint RF/Antenna Selection and HBF Solution}
\SetAlgoLined
\DontPrintSemicolon
\label{overall}
\begin{algorithmic}[1]
\STATE Initialization: $\mathbf{W_{BB}}, \mathbf{W_{RF}}, \boldsymbol{\delta}, \boldsymbol{\theta}, \mathbf{H}, \mathbf{C}$
\REPEAT
	\STATE Perform LDPC-based connection in Algorithm \ref{alg:LDPC} for $\mathbf{C}$
	\STATE Conduct RF/antenna selection scheme in Algorithm \ref{alg:MARS-S} if sequential MP or in Algorithm \ref{alg:MARS-P} if parallel MP
	\STATE Obtain optimal selection solution $\boldsymbol{\delta}^*, \boldsymbol{\theta}^*$
	\STATE Perform genetic-algorithm based hybrid beamforming solution, with initial gene generation based on \eqref{ggg} 
	\REPEAT
		\STATE Execute elite gene selection based on sorting fitness values of \eqref{fitness}
		\STATE Conduct crossover operation based on \eqref{cross}
		\STATE Perform mutation operation with  $N_{mu}^{B}$ elements with their corresponding ranges
	\UNTIL{Convergence satisfying $\max \, F^{(\tau)}(\mathcal{X}_g) \leq \iota_{1}$ and $\left| \max \, F^{(\tau)}(\mathcal{X}_g) - \max \, F^{(\tau-1)}(\mathcal{X}_g)  \right| \leq \iota_{2}$}
\UNTIL	
\end{algorithmic}
\end{algorithm}

\subsection{Complexity Analysis}
	The computational complexity is demonstrated in Table \ref{complex}. We know that the joint solution for the conventional full-connection HBF architecture in problem $\eqref{mainprob}$ requires an exponential complexity of $\mathcal{O} \left(2^{N_{RF}N_{ant}}\right)$, which leads to a potential difficulty in acquiring the global optimum. On the other hand, the proposed MARS scheme possesses a much lower computational complexity than the exhaustive search of the original problem, i.e., MARS-S as a sequential MP scheme achieves a complexity of $\mathcal{O}\left( 2^{|\mathcal{R}^{\mathcal{C}} | N_{RF,k} + |\mathcal{A}^{\mathcal{C}} | N_{ant,l} } \right)$, whereas the parallel-type mechanism of MARS-P possesses a complexity of $\mathcal{O}\left( 2^{N_{RF,k}+ N_{ant,l} } \right)$. It is observed that MARS-S has a higher computational complexity than MARS-P due to the sequential MP mechanism. The RF/antenna controllers update their latest information and then determine the corresponding selection results, which leads to no missed messages. In contrast, greedy-based selection \cite{49} only considers a single-antenna policy with others fixed, which has a complexity order of $\mathcal{O} \left(|\mathcal{R}^{\mathcal{C}} | N_{RF,k}+ | \mathcal{A}^{\mathcal{C}} | N_{ant,l} \right)$. Moreover, the genetic-based selection method \cite{50}, which adopts genetic generation, elite selection, crossover and mutation, possesses a complexity order of $\mathcal{O}\left( N_{crx}N_{RF}N_{ant} + N_{mu} \right)$, where $N_{crx}$ and $N_{mu}$ denote the numbers of crossover and mutation operations for selection, respectively. Since both papers \cite{49,50} only consider antenna selection, we therefore consider the genetic and greedy selection to the RF selection problem for fair comparison. Due to the simultaneous passing of information in MARS-P, there may be either missed or out-of-date information, potentially resulting in a locally optimal solution. However, compared to the full connection of the HBF scheme in the original problem, the proposed MARS scheme reaches the lowest computational complexity while realizing nearly optimal solution acquisition. Additionally, optimizing genetic-based hybrid beamforming algorithm requires a complexity order of $\mathcal{O}\left(  N_{crx}^{B}N_{RF}^2 N_{ant} N_{S} + N_{mu}^{B} \right)$, where $N_{crx}^{B}$ and $N_{mu}^{B}$ indicate the numbers of crossover and mutation operation for hybrid beamforming, respectively.

\begin{table}
\begin{center}
\small
\caption {Computational Complexity}
    \begin{tabular}{|l|r|}
    \hline
        Algorithm & Complexity \\
        \hline\hline
		Global Optimum of Selection & $\mathcal{O} \left(2^{N_{RF}N_{ant}}\right)$ \\ \hline
		Proposed MARS-S & $\mathcal{O}\left( 2^{|\mathcal{R}^{\mathcal{C}} | N_{RF,k}+ | \mathcal{A}^{\mathcal{C}} | N_{ant,l} } \right)$ \\ \hline
		Proposed MARS-P & $\mathcal{O}\left( 2^{N_{RF,k}+ N_{ant,l} } \right)$  \\ \hline
		Greedy-Based Selection & $\mathcal{O} \left( |\mathcal{R}^{\mathcal{C}} | N_{RF,k}+ | \mathcal{A}^{\mathcal{C}} | N_{ant,l}  \right)$ \\ \hline
		Genetic-Based Selection & $\mathcal{O}\left(  N_{crx}N_{RF}N_{ant} + N_{mu} \right)$\\ \hline
		Genetic-Based Beamforming & $\mathcal{O}\left(  N_{crx}^{B}N_{RF}^2 N_{ant} N_{S} + N_{mu}^{B} \right)$\\ \hline
    \end{tabular} \label{complex}
\end{center}
\end{table}


\begin{table}
\begin{center}
\footnotesize
\caption {System Parameter Settings}
    \begin{tabular}{|l|l|}
    \hline
        Parameter & Value \\        
        \hline\hline
        Carrier frequency $f_c$ & $28$ GHz \\ \hline
        Distance between transmitter and receiver $d$ & $100$ meters \\ \hline
		Number of data stream $N_{S}$ & $4$ \\ \hline
		Transmit power $P_{T}$ of user & $20$ dBm \\ \hline
		Noise power $N_{0}$ & $-85$ dBm \\ \hline
		BS power consumption of an RF chain $P_{RF}$ & $40$ mWatt \\ \hline
		BS power consumption of baseband processing $P_{BB}$ & $800$ mWatt \\ \hline
		BS power consumption of an ADC $P_{ADC}$ & $100$ mWatt \\ \hline
		BS power consumption of a phase shifter $P_{PS}$ & $10$ mWatt \\ \hline
		BS power consumption of an LNA $P_{LNA}$ & $10$ mWatt \\ \hline
		Maximum power support of each RF chain $P_{o}$ & $25$ dBm \\ \hline
		Maximum allowable BS operating power $P_{max}$ & $44$ dBm \\ \hline
        Thresholds for RF and antenna selection $\eta_{r},\eta_{a}$ & $0.7, 0.7$ \\ \hline
        Population of genes $N_G$ & $100$ \\ \hline
        Number of crossover genes $N_{crx}^B$ & $100$ \\ \hline
        Number of mutation elements $N_{mu}^B$ & $0.1\cdot N_G N_{crx}^B$ \\ \hline
        Penalty term $\xi$ & $10^3$ \\ \hline
        Convergence thresholds $\iota_1$, $\iota_2$ & $10^{-1}$, $10^{-1}$ \\ \hline
    \end{tabular} \label{Parameter}
\end{center}
\end{table}

\section{Performance Evaluation} 

The performance of the proposed MARS scheme is evaluated via simulations. We consider an uplink transmission with a single transmitter and a hybrid beamformed-MIMO receiver. The distance between the transmitter and receiver is set to $d=100$ meters, and the operating carrier frequency is $f_c=28$ GHz. The transmit power of the user is set to $20$ dBm \cite{uepower}. The receiver power consumption values of the BS utilized in our simulation are $P_{RF}=40$ mWatt, $P_{BB}=800$ mWatt, $P_{ADC}=100$ mWatt, and $P_{PS}=P_{LNA}=10$ mWatt \cite{12,bb,lna,pslna}. The maximum power support for each RF chain associated with the maximum number of the connected receiving antennas is $P_{o}=25$ dBm, and the allowable system operating power at receiving BS is $P_{max}=44$ dBm. The parameter $\rho=10^{-PL/10}$ refers to the distance-based path loss, where $PL=32.4+20\log_{10}(f_{c})+30\log_{10}(d)$ \cite{spec}. The learning rate threshold for RF/antenna selection is $\eta_{r}=\eta_{a}=0.7$. For simulation simplicity, we consider equivalent numbers of elements in each RF/antenna group, i.e., $N_{RF,k}=N_{RF}^{G}$ and $N_{ant,l}=N_{ant}^{G}, \forall k,l$. The related outcome is characterized by power consumption (W) and corresponding EE (bits/J/Hz)\textsuperscript{\ref{note5}}\footnotetext[5]{The letter 'W' denotes "Watts", while 'J' denotes "Joules", which is the product of time in seconds (s) and power in Watts.\label{note5}}, i.e., $EE=R(\mathbf{\Delta,\Theta}, \mathbf{W}_{\mathbf{RF}}, \mathbf{W}_{\mathbf{BB}})/P(\boldsymbol{\delta},\boldsymbol{\theta})$. The pertinent system parameters are listed in Table \ref{Parameter}. We further notice that the proposed MARS scheme includes both RF/antenna selection and HBF in the following simulation results.

\subsection{Convergence and Learning Rate}

\begin{figure}[!t]
\centering
\subfigure[]{\includegraphics[width=1.65 in]{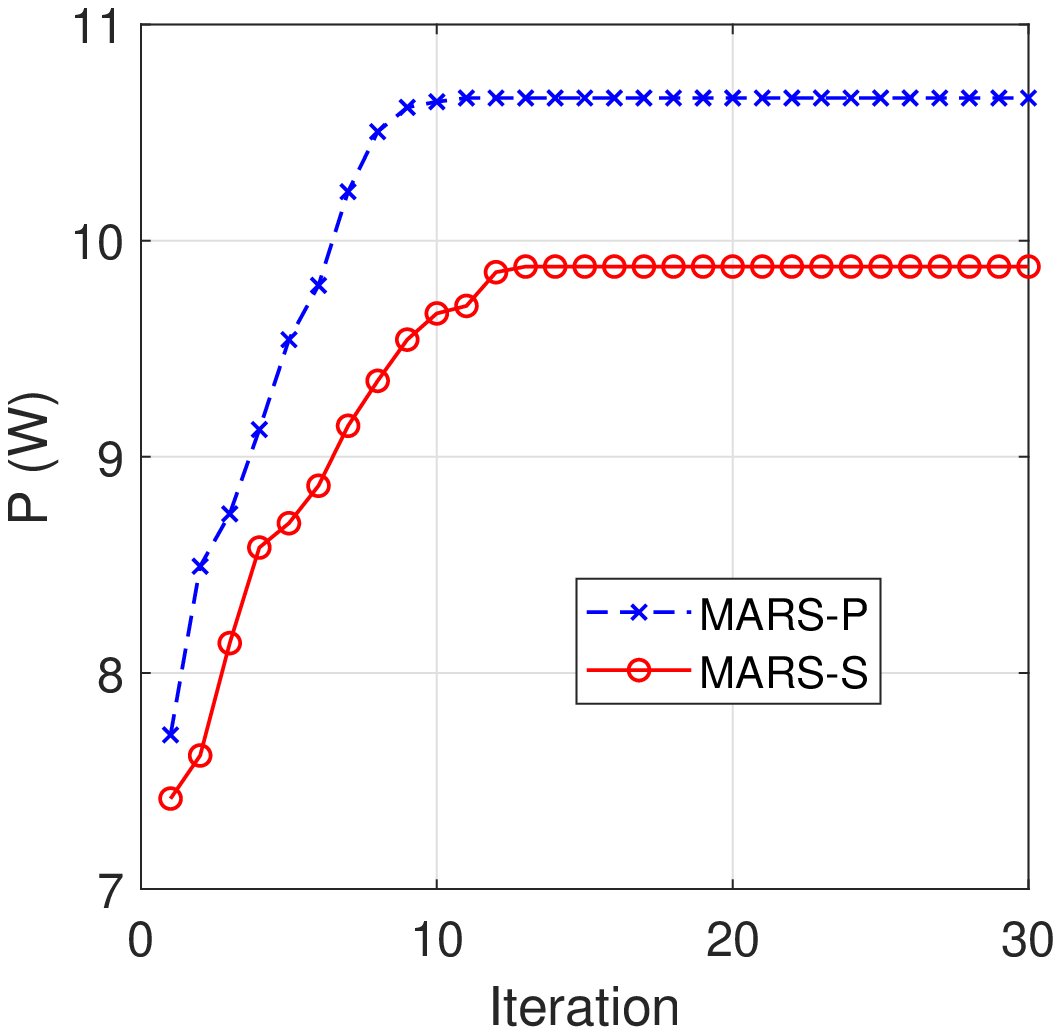} \label{iter1}}
\subfigure[]{\includegraphics[width=1.65 in]{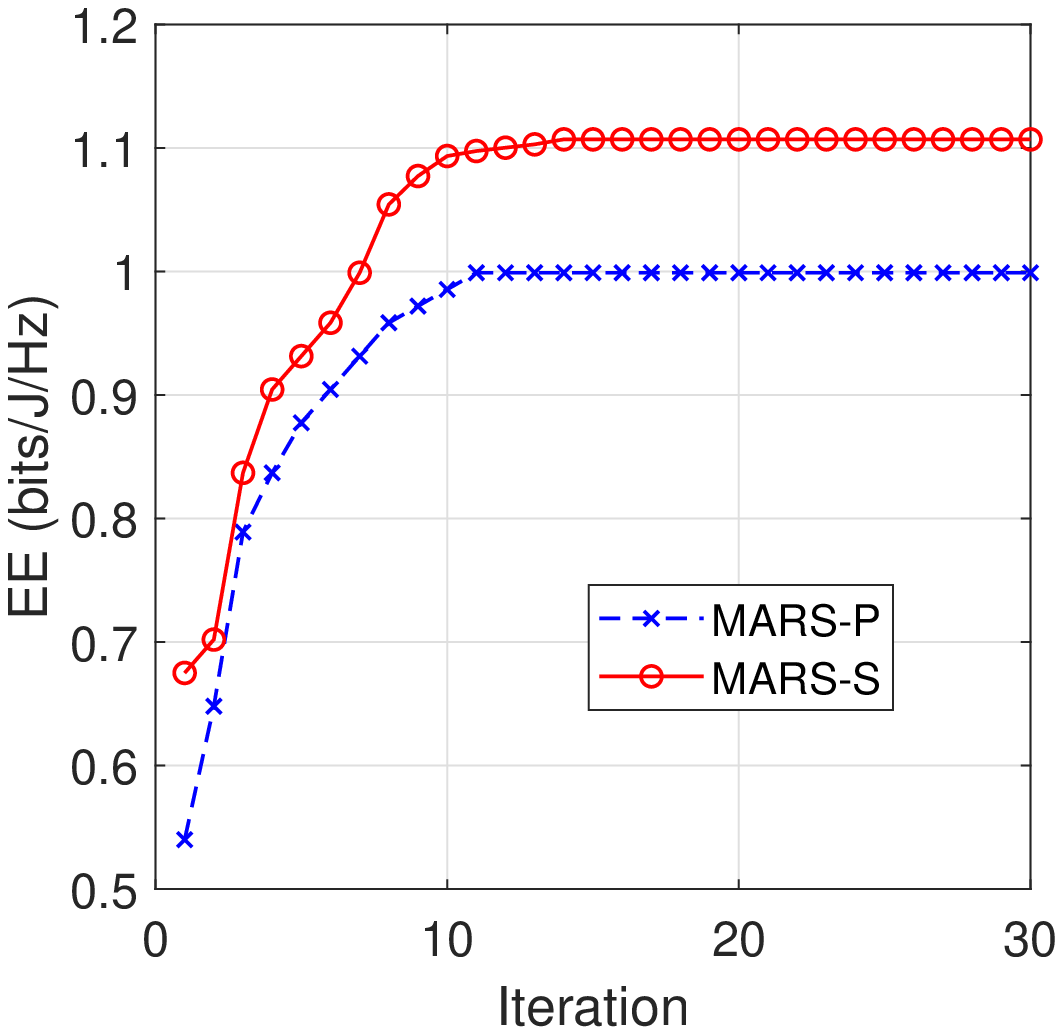} \label{iter3}}
\caption{Convergence of the proposed MARS-S and MARS-P subschemes in terms of (a) power consumption and (b) EE.} \label{iter}
\end{figure}

As illustrated in Fig. \ref{iter}, the convergence of the proposed MARS scheme is validated through simulations. We infer from both figures that the proposed MARS-S achieves lower power consumption and higher EE than MARS-P. This is because with sequential passing, up-to-date decisions can be completely conveyed to other nodes in order, whereas the parallel scheme potentially confuses RF/antenna controllers with multiple simultaneous received information. This results in worse information updates and corresponding low-performance policies. Moreover, because a better policy can be obtained from MARS-S, it requires more iterations to reach convergence. That is, MARS-S needs approximately $17$ iterations resulting in the power consumption of approximately $9.88$ W and EE of $1.1$ bits/J/Hz, while MARS-P converges with higher power consumption and lower EE at the $10$-th round, which strikes a compelling tradeoff between convergence speed and performance.

\begin{figure}[!t]
\centering
\subfigure[]{\includegraphics[width=\m in]{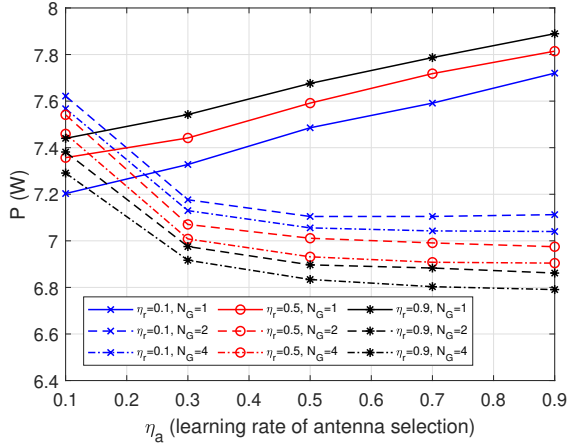} \label{lr1}}
\subfigure[]{\includegraphics[width=\m in]{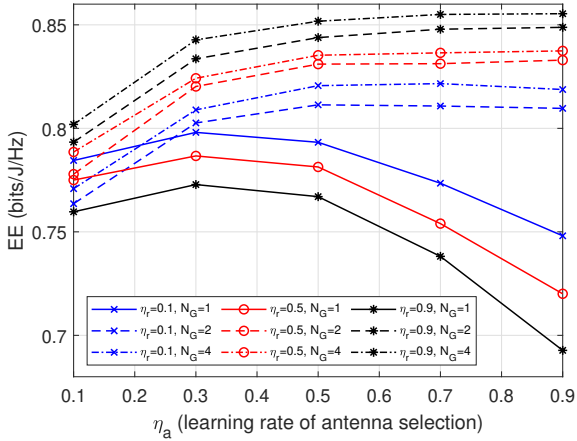} \label{lr3}}
\caption{Performance of the proposed MARS scheme considering different learning rates for antenna selection $\eta_{a} \in \{0.1,0.3,0.5,0.7,0.9\}$ and RF selection $\eta_{r}\in\{0.1,0.5,0.9\}$ with $N_{G} = N_{RF}^{G} = N_{ant}^{F} \in \{1,2,4\}$ elements in each RF/antenna group with respect to (a) power consumption and (b) EE.} \label{lr}
\end{figure}

In Fig. \ref{lr}, we evaluate the performance in terms of power consumption and EE of the proposed MARS scheme under different learning rates $\eta_{a}\in\{0.1,0.3,0.5,0.7,0.9\}$ for antenna selection and $\eta_{r}\in\{0.1,0.5,0.9\}$ for RF chain selection. We consider the same number of RF/antenna elements per group, $N^{G}=N_{RF}^{G}=N_{ant}^{G}\in\{1,4\}$. We observe from Figs. \ref{lr1} and \ref{lr3} that higher learning rates result in lower power consumption and higher EE when more nodes are managed by RF/antenna controllers, i.e., $N_G\in\{2,4\}$. This is because a quicker policy update can prevent out-of-date information from being conveyed to other controllers under a more centralized architecture, which requires fewer hops to convey the decision policy. However, under a more distributed architecture with fewer nodes per group, namely, $N_{G}=1$, a faster rate of $\eta_{a}=0.9$ consumes more power, up to an average power of $7.89$ W, since the latest optimal message cannot be transferred to faraway nodes before the next update. The best policy is potentially obscured by newly passed messages from neighboring controllers. Moreover, concave EE curves are obtained for $N_{G}=1$ due to the moderate update speed, where the optimum is attained when $\eta_{a}=0.3$, as shown in Fig. \ref{lr3}.

\subsection{Different RF/Antenna Configurations}

\begin{figure}[!t]
\centering
\subfigure[]{\includegraphics[width=\m in]{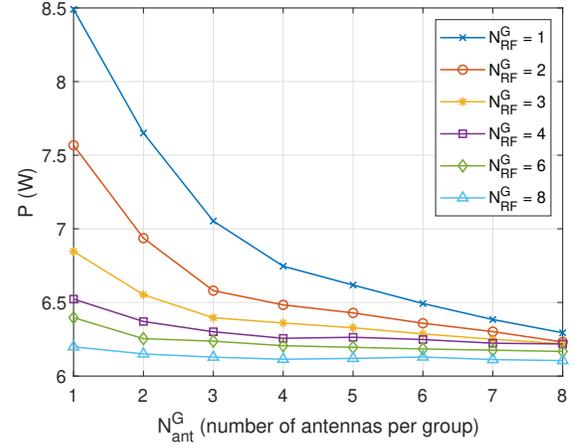} \label{ng1}}
\subfigure[]{\includegraphics[width=\m in]{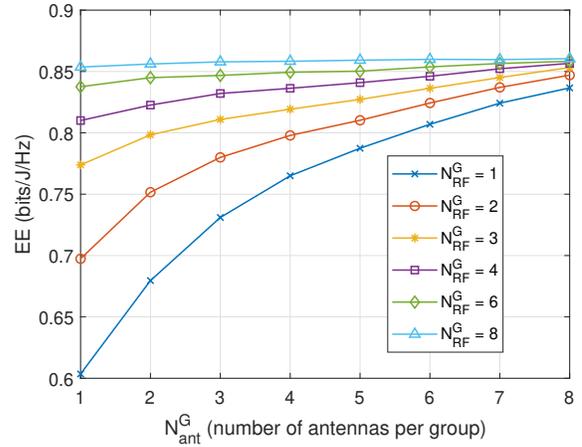} \label{ng3}}
\caption{Performance of MARS with various numbers of clustered RF/antenna elements in groups $N_{RF}^{G}\in\{1,2,3,4,6,8\}$ and $N_{ant}^{G}\in\{1,2,3,4,5,6,7,8\}$, respectively, in terms of (a) power consumption and (b) EE.} \label{ng}
\end{figure}

As shown in Fig. \ref{ng}, we evaluate the effects of different numbers of RF/antenna elements in each clustered group. We observe from Fig. \ref{ng1} that a more centralized architecture with more antennas per group with larger $N_{ant}^{G}$ achieves lower power consumption when $N_{ant}^{G}\in\{1,2,3\}$. The first reason for this phenomenon is that the controller is capable of obtaining a nearly optimal solution with a higher degree of freedom in selecting an on-off policy. The other reason is the shorter paths for conveying the determined message from one group to the others, which reveals similar effects, as illustrated in Fig. \ref{lr}. Furthermore, the curves for $N_{RF}^{G}\in\{6,8\}$ are flatter than those of $N_{ant}^{G}\in\{1,2,3,4\}$ since the selection of RF chains, with considerably higher power consumption, becomes more dominant than antenna selection. This also implies that fewer RF controllers are potentially able to provide a better policy with lower power consumption and accordingly higher EE performance.

\begin{figure}[!t]
\centering
\subfigure[]{\includegraphics[width=\m in]{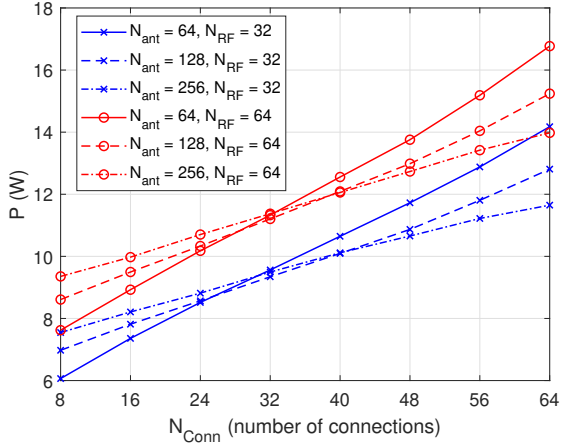} \label{Nc1}}
\subfigure[]{\includegraphics[width=\m in]{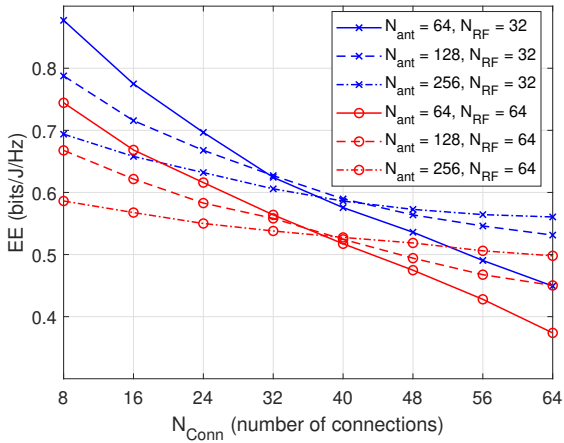} \label{Nc3}}
\caption{Effect of different numbers of connections $N_{conn}\in\{8,16,24,32,40,48,56,64\}$ with $N_{RF}\in\{32,64\}$ RF chains and $N_{ant}\in\{64,128,256\}$ antennas in terms of (a) power consumption and (b) EE.} \label{Nc}
\end{figure}

In Fig. \ref{Nc}, we show the impact of the proposed MARS algorithm under different numbers of connections among RF/antenna controllers and nodes from $N_{Conn}=8$ to $64$ with $N_{RF}\in\{32,64\}$ RF chains and $N_{ant}\in\{64,128,256\}$ antennas. We recall that $N_{Conn}$ is the number of antennas associated with a single RF chain, which implies a tendency to become a fully connected beamformer. When $N_{Conn}=64$, it has the highest power consumption and lowest EE due to simultaneous message passing from nodes in different groups, which leads to more uncertain and complex decision-making. For example, an RF controller will update various distinct messages delivered from massive antenna groups connected to it, which may result in a more inappropriate update and corresponding local solution. Moreover, more antennas result in less power consumption due to a higher degree of freedom of selection, e.g., power reduces from $16.7$ W to $13.9$ W when $N_{ant}$ increases from $64$ to $256$ with $N_{RF}=64$ RF chains and $N_{Conn} = 64$. With the aid of HBF, more antennas can be switched off for conserving power. However, RF selection exhibits a  reversed trend because it has higher operating power and lower selection freedom than antennas, which consumes approximately additional power of $2.6$ W when $N_{RF}=64$ compared to the case of $N_{RF}=32$ under $N_{ant}=64$ antennas and $N_{Conn} = 64$. To elaborate slightly further, we observe that another phenomenon takes place when we have the fewest connections, namely, $N_{Conn}\in\{8,16,24\}$. A higher number of antennas with fewer links can cause some optimal messages to be missed in the current iteration or confused with irrelevant information, deteriorating system performance.

\begin{figure}[!t]
\centering
\subfigure[]{\includegraphics[width=\m in]{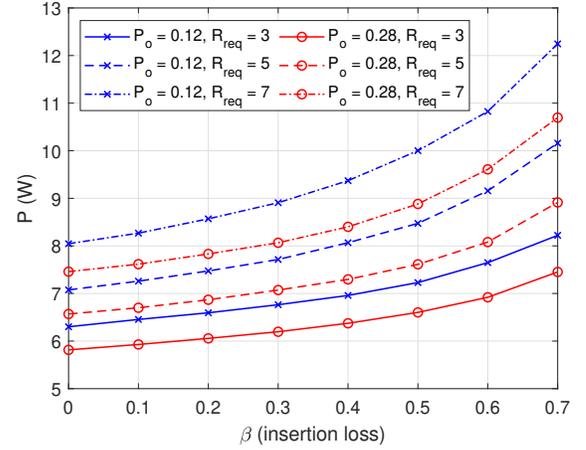} \label{p1}}
\subfigure[]{\includegraphics[width=\m in]{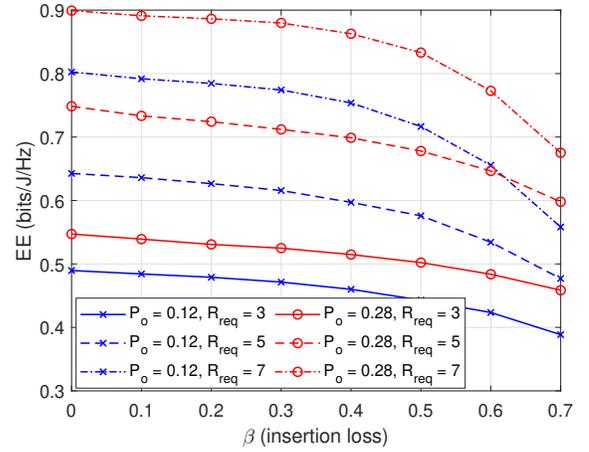} \label{p3}}
\caption{Performance of the proposed MARS scheme under different values of the insertion loss $\beta \in \{0,0.1,0.2,0.3,0.4,0.5,0.6,0.7\}$, maximum power support for each RF chain $P_{o}\in\{0.12,0.28\}$ W, and QoS $R_{req}\in\{3,5,7\}$ bits/s/Hz in terms of (a) power consumption and (b) EE.} \label{plim}
\end{figure}

	As depicted in Fig. \ref{plim}, the performance of the proposed MARS algorithm is evaluated considering different values of the insertion loss and maximum power support for each RF chain as well as QoS constraints. We observe from both the received signal in $\eqref{recvsignal}$ and throughput in $\eqref{recvcapacity}$ that both are monotonically decreasing and concave with respect to $\beta$. Thus, more power is consumed to compensate for the increased insertion loss, which exhibits a monotonically increasing convex shape, as shown in Fig. \ref{p1}. Moreover, higher capability in terms of the power support for each RF chain, i.e., $P_{o}=0.28$ W, requires lower power consumption due to the attainable higher degree of freedom in RF/antenna selection. Similarly, without any insertion loss, namely, when $\beta=0$, little difference in power consumption is observed due to the relaxed constraint in $\eqref{RF_insertionloss}$. Accordingly, most of the power can be utilized mainly for QoS satisfaction, not for compensation of insertion losses. With the increased rate demand, more power is intuitively required with a higher QoS requirement, e.g., we need approximately $3.2$ W more power in most cases when QoS increases from $3$ to $7$ bits/s/Hz. Moreover, as illustrated in Fig. \ref{p3}, decreasing EE curves would intersect with $\beta$ larger than $0.6$ and QoS of $\{5,7\}$ bits/s/Hz because the system requires considerably higher power for QoS $R_{req}=7$ bits/s/Hz to simultaneously compensate for substantial insertion losses and satisfy the QoS requirement.

\subsection{Benchmarking}

\begin{figure}
\centering
\subfigure[]{\includegraphics[width=\m in]{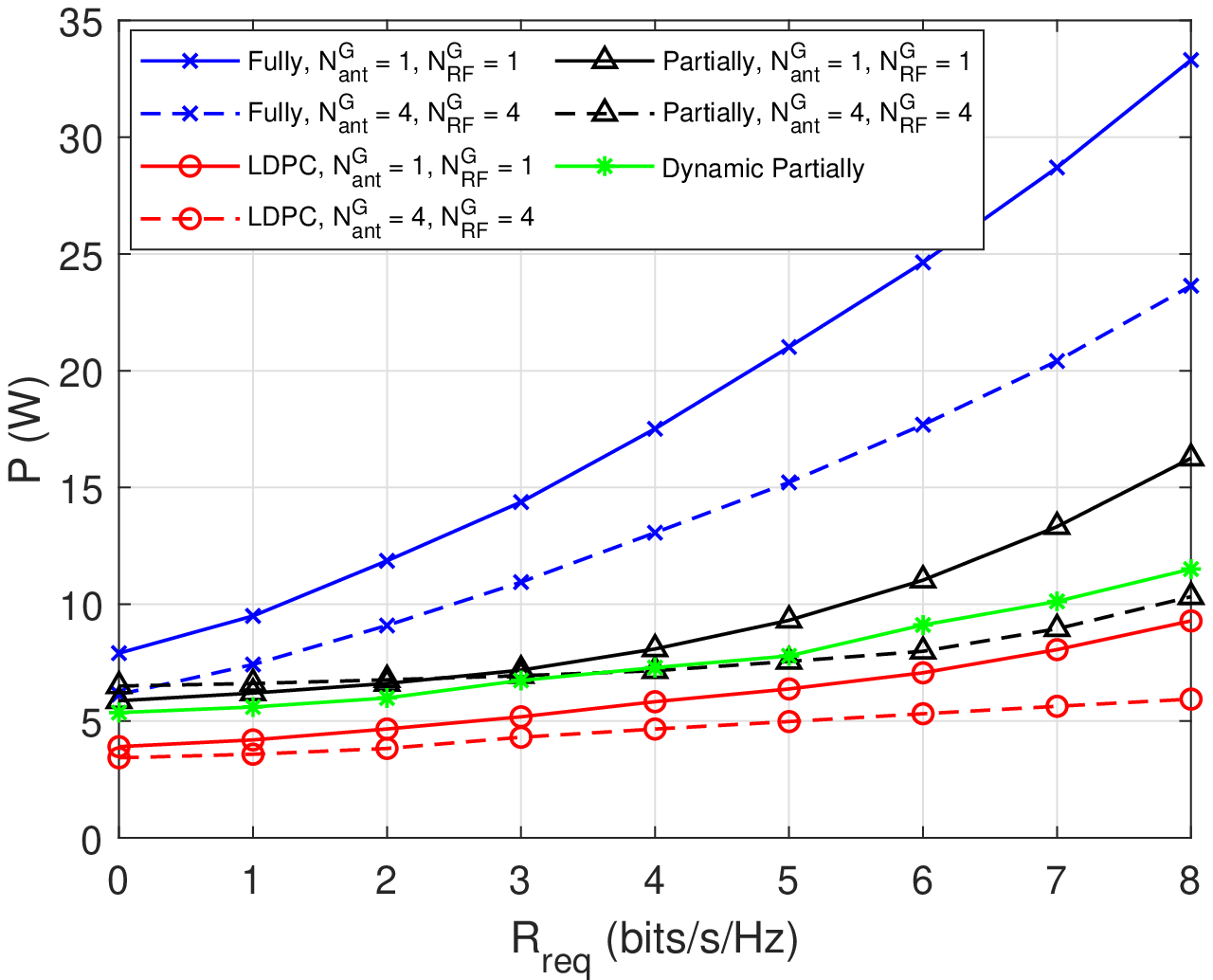} \label{qos1}}	
\subfigure[]{\includegraphics[width=\m in]{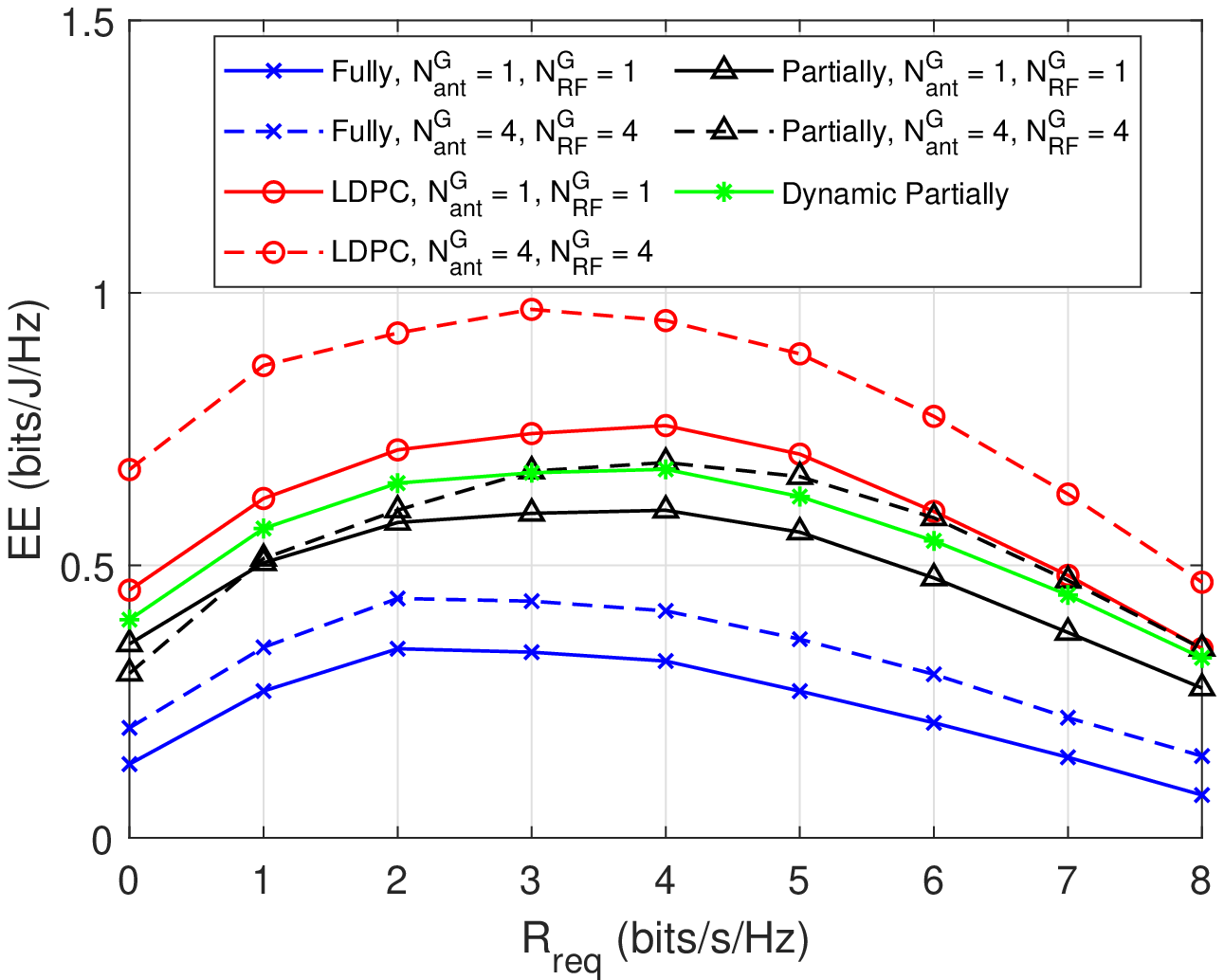} \label{qos3}}
\caption{Performance comparison under different architectures of fully, partially, dynamic partially and LDPC-based connections with $N_{ant}^{G}=N_{RF}^{G}\in\{1,4\}$ RF/antenna elements per group and QoS requirements of $R_{req}\in\{0,1,2,3,4,5,6,7,8\}$ bits/s/Hz in terms of (a) power consumption and (b) EE. The dynamic partial connections are adjustable in each subarray size; hence, the parameters of $N_{ant}^{G}=N_{RF}^{G}$ are not set.} \label{qos}
\end{figure}

In Fig. \ref{qos}, we compare the performance under different architectures, i.e., full connection \cite{29,30,31}, fixed partial connection \cite{24,25,26}, the dynamic partial connection designed in \cite{new1} and the proposed LDPC-based architecture considering different numbers of RF/antenna elements per group and QoS requirements. We also compare two different types of partial connections. Dynamic partial connection in \cite{new1} is performed by exhaustively obtaining the optimal subarray patterns, i.e., each subarray has different sizes of elements but is fully connected in its group with the same connection as in the fixed case. This potentially provides a higher degree of freedom to construct a high-EE partial connection, which accordingly achieves a higher EE with similar power consumption to the fixed partially connected architecture. Moreover, as explained previously, it requires more power to satisfy higher QoS demands, especially under the fully connected architecture with excessive power consumption up to approximately $33.3$ W. Partial connection results in approximately $6.9$ W more power consumption than the proposed LDPC-based architecture, since the controllers with partial connections cannot acquire certain messages passed from some nodes. Furthermore, a larger difference in power is induced by higher QoS requirements by comparing the distributed control architecture with $N_{ant}^G = N_{RF}^G = 1$ and a centralized architecture with $N_{ant}^G=N_{RF}^G=4$. This is because some information in the distributed architecture has to be conveyed from faraway nodes with massive hops, generating an out-of-date policy to fulfill high QoS requirements, as also shown in Fig. \ref{ng}. To elaborate a little further, we observe opposite trends for the partially connected method. Under lower QoS $R_{req}\leq 4$ bits/s/Hz, controllers in a more distributed arrangement, i.e., $N_{ant}^G = N_{RF}^G = 1$, tend to greedily optimize themselves due to unknown or nonupdated messages passed from others. In contrast, a more centralized arrangement with $N_{ant}^G = N_{RF}^G = 4$ is able to provide a more appropriate policy with adequate QoS-aware information, achieving lower power and, accordingly, higher EE. In conclusion, the proposed LDPC-based architecture under the MARS scheme can conserve approximately $24$ W and $6.9$ W and achieves approximately $2.2$ and $1.4$ times EE compared to fully and partially connected structures, respectively.

\begin{figure*}
\centering
\subfigure[]{\includegraphics[width=\m in]{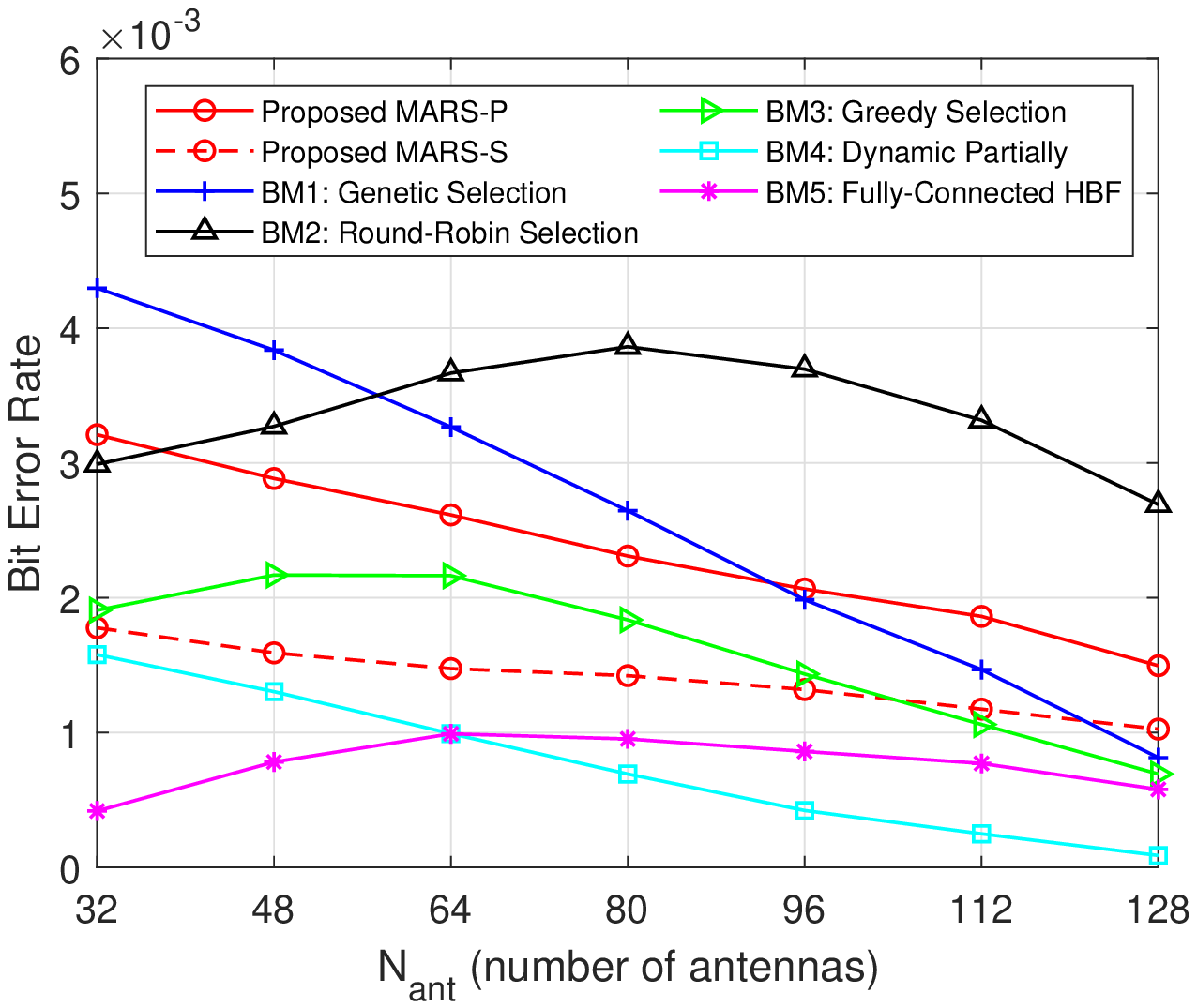} \label{bmber}}
\subfigure[]{\includegraphics[width=\m in]{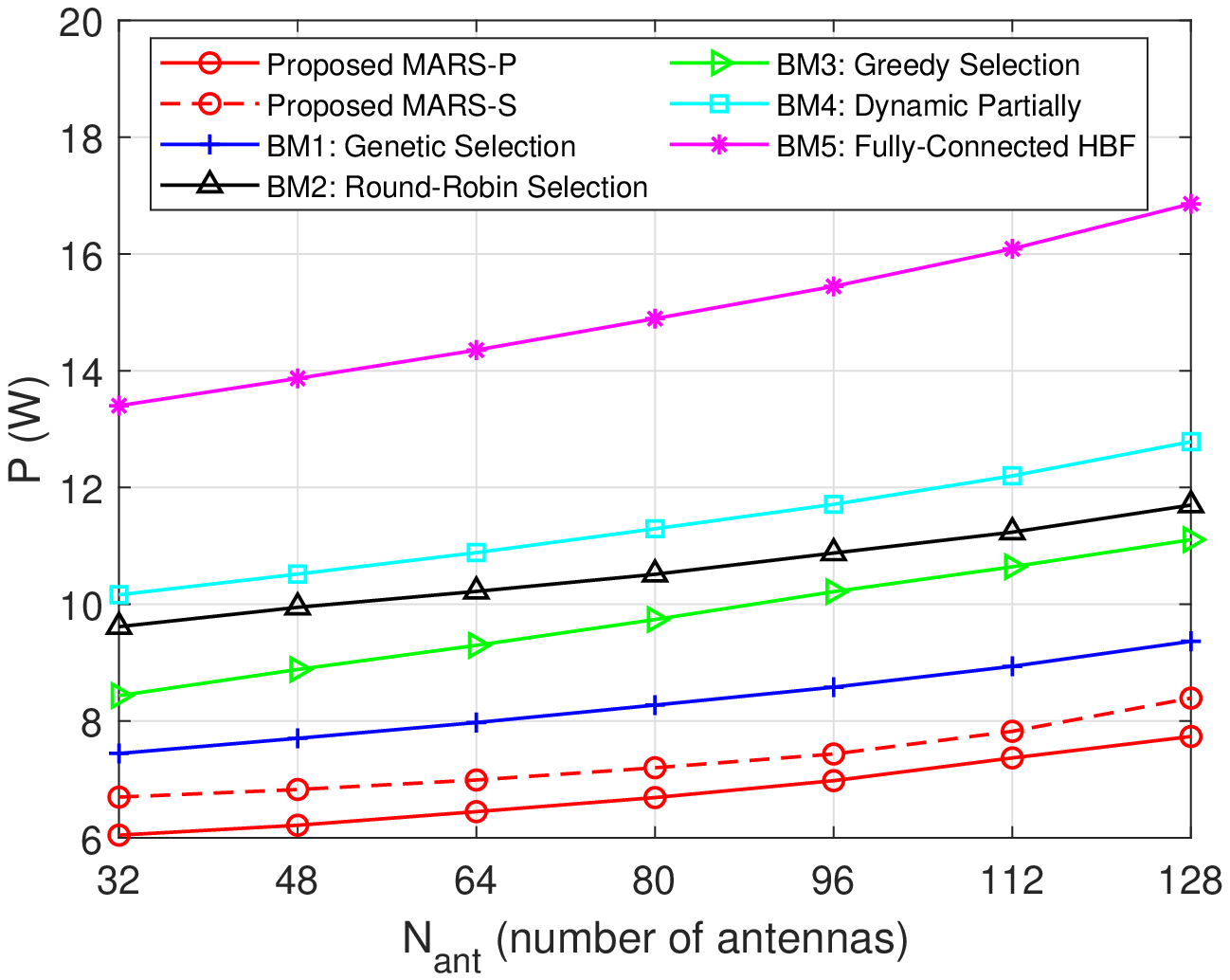} \label{bm1}}
\subfigure[]{\includegraphics[width=\m in]{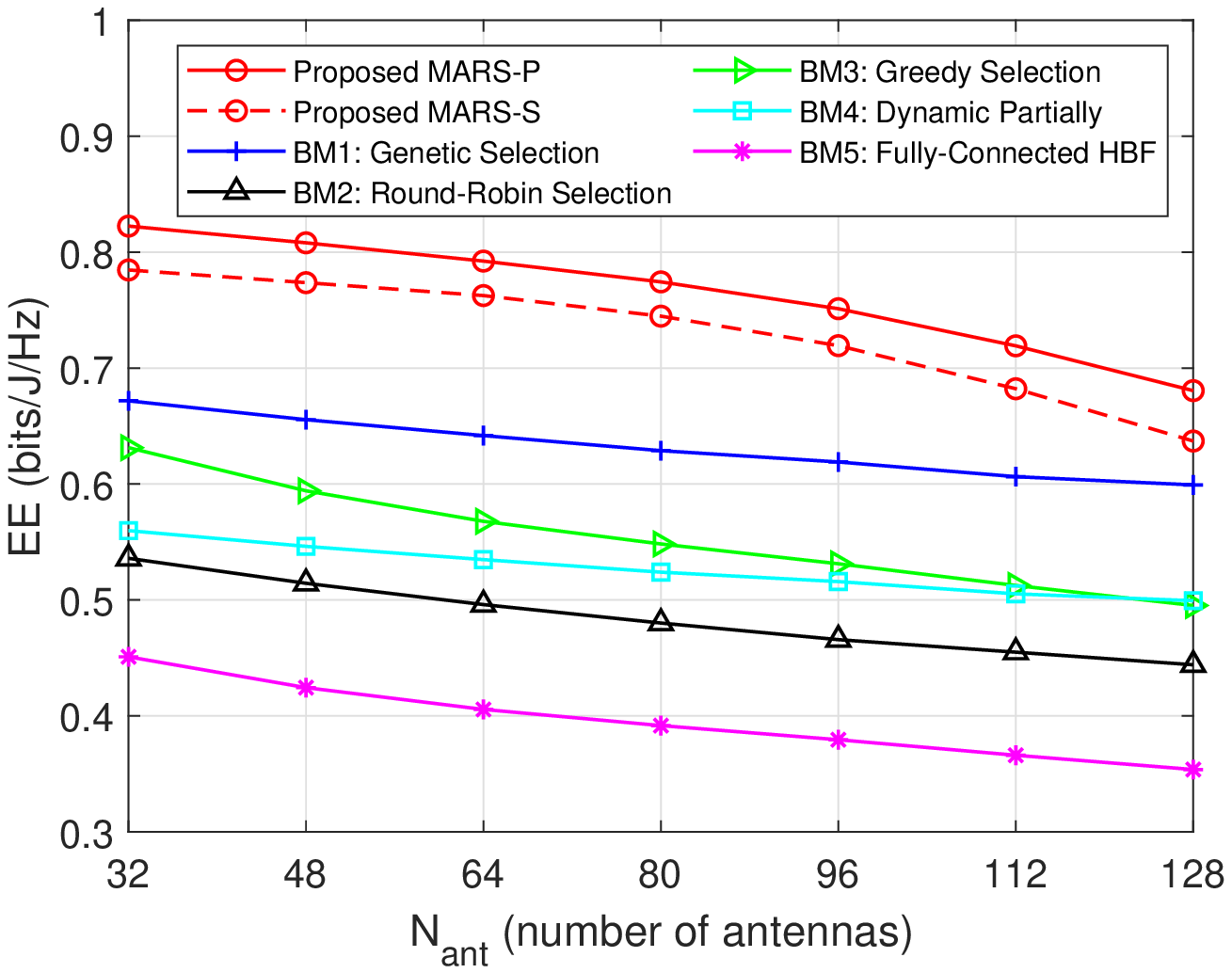} \label{bm3}}
\subfigure[]{\includegraphics[width=\m in]{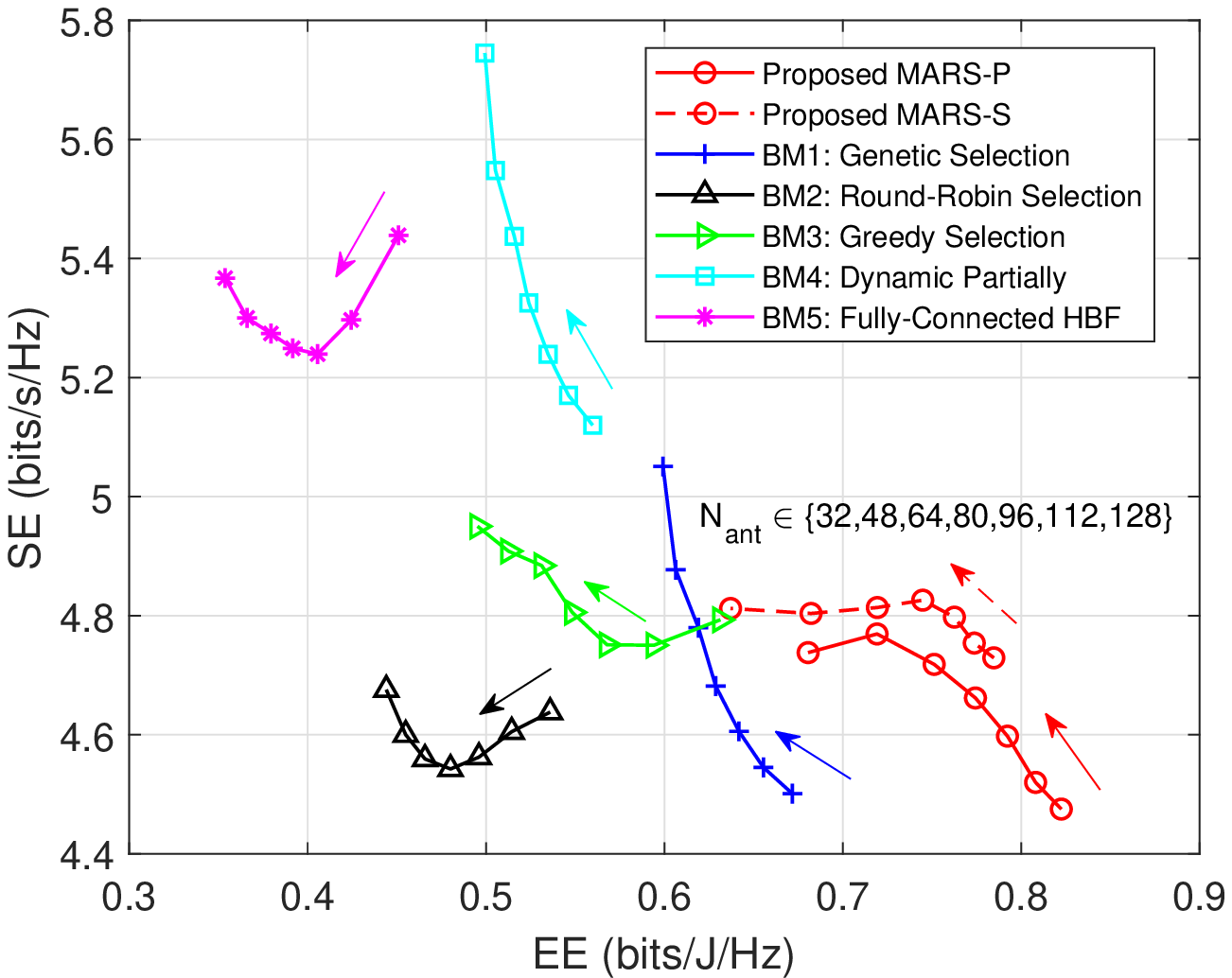} \label{bmsee}}
\caption{Performance comparison of the proposed MARS scheme and benchmarks of genetic, greedy and round-robin selection with different numbers of receiving antennas $N_{ant}\in\{32,48,64,80,96,112,128\}$ in terms of (a) bit error rate, (b) power consumption, (c) EE, and (d) spectrum-energy efficiency. The direction of the arrows indicates the increment of $N_{ant}$.} \label{bm}
\end{figure*}

	In Fig. \ref{bm}, we compare the comprehensive performance in terms of the bit error rate (BER), power consumption, EE, and spectrum-energy efficiency for the proposed MARS schemes MARS-P and MARS-S with five benchmarks (BMs), with are elaborated as follows. Note that modulation of 64-quadrature modulation (64-QAM) is adopted for computing BER. We consider an equivalent QoS constraint $R_{req}=3$ bits/s/Hz, insertion loss $\beta=0.3$, $N_{RF}=32$, and $N_{RF}^G=N_{ant}^G=4$ for a fair comparison.
\begin{itemize}
	\item \textbf{BM1 Genetic-Based Selection} \cite{50}: The selection solution is initially provided with a given genetic population size. The candidate solution is obtained by adopting a discrete genetic algorithm with crossover, mutation, elite selection and offspring gene generation. All RF chains are selected to operate, i.e., $\delta_n=1\ \forall n\in \mathcal{R}$. This case is conducted based on the designed LDPC-based connection.
	
	\item \textbf{BM2 Round-Robin Selection}: The candidate solution of RF/antenna selection is randomly initialized and evaluated in a round-robin manner. We change the mechanism in \cite{bm101} from MIMO user scheduling to RF/antenna selection. The policy under fully off antennas or RF chains is excluded here due to zero rate performance. This case is conducted based on the designed LDPC-based connection.
	
	\item \textbf{BM3 Greedy-Based Selection} \cite{49}: It optimizes the policy of a single antenna with the other solutions fixed, which can be regarded as a distributed approach with $N_{ant}^G=1$ in this work. First, we randomly conduct antenna selection by either turning on or off all nodes. Afterwards, we iteratively select the antenna that leads to the lowest power consumption in a selfish manner until the completion of the final node. All RF chains are selected to be under operation, i.e., $\delta_n=1\ \forall n\in \mathcal{R}$. This case is conducted based on the designed LDPC-based connection.
	
	\item \textbf{BM4 Dynamic Partial Connection} \cite{new1}: This connection is implemented by exhaustively obtaining the optimal subarray patterns, i.e., each subarray has elements of different sizes but is fully connected in its subgroup as in the fixed case, which provides a higher degree of freedom than partial connection with an equal subarray size. All RF chains and antennas in each subarray are selected to be operated.
	
	\item \textbf{BM5 Fully Connected HBF} \cite{bm100}: The conventional HBF architecture is implemented under a full connection with whole RF chains and antennas selected to be operated.
	
\end{itemize}
In Fig. \ref{bmber}, we observe that BMs 4 and 5 have the lowest BERs due to more connected links in HBF, i.e., more information with a higher degree of freedom can be obtained to achieve a higher rate. BMs 1 and 3 show medium BER performance under the LDPC architecture. With similar performance, the proposed MARS scheme has a lower BER with more antennas due to more information being passed. Round robin potentially switches off beneficial nodes, leading to the worst BER. To elaborate a little further, BER reflects the negative correlation in the performance in terms of spectrum efficiency (SE), as shown in Fig. \ref{bmsee}, i.e., a lower BER corresponds to a higher SE. We infer from Fig. \ref{bm1} that more power is required to operate the increased number of deployed antennas. Considering the LDPC architecture, however, BM2 requires the highest power due to random selection without any information utilization. For BM3, a selfish-style method is employed to optimize its own antenna selection without using messages from other controller groups. However, as a compromise mechanism in BM1, genetic-based selection takes into account the selection policy from other groups, which preserves powers of $2.24$ W and $1.74$ W compared to round-robin and greedy selection, respectively, under $N_{ant}=128$. In contrast, BM4, regarded as the optimal solution with partial connection, requires slightly more power than the baseline solution in the LDPC architecture due to comparably fewer links being required to achieve satisfactory service. Moreover, BM5 with full connection consumes the most power resources, approximately twice as much as the proposed MARS schemes.

Although MARS-P has a faster policy update speed, it potentially leads to the missed information due to the parallel transfer in message passing, which has slightly higher power consumption of approximately $0.66$ W compared to that of the sequential approach in MARS-S. Benefiting from both the LDPC connection and message passing-based design, the proposed MARS can exactly convey appropriate information to all the other controller groups to realize a better RF/antenna selection policy. In this context, as observed from Figs. \ref{bm1} and \ref{bm3} considering $N_{ant}=128$, the proposed MARS scheme outperforms all the benchmarks, which is able to preserve powers of approximately $1.6$, $4$, $3.4$, $5.1$ and $9.1$ W and to improve EE by approximately $13\%$, $53\%$, $37\%$, $36\%$ and $93\%$ compared to BMs 1 to 5, respectively. Furthermore, as shown in Fig. \ref{bmsee}, it also strikes a tradeoff between the SE and EE metrics. MARS accomplishes the highest EE at the expense of moderate throughput, while full-connection and dynamic partial-connection have the higher rate by consuming more power, leading to the compellingly lower EE. Additionally, BMs 1 and 4 exhibit asymptotic shaped curves because the optimal solution is potentially obtained from genetic selection with a smaller search space.

\begin{figure}
\centering
\includegraphics[width=3.3in]{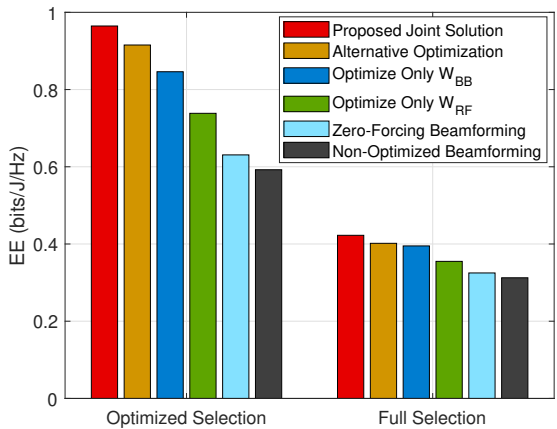}
\caption{Comparison of different beamforming methods for $\mathbf{W_{BB}}$ and $\mathbf{W_{RF}}$ with $N_{ant}=64$ and $N_{RF}=32$.} \label{bf}
\end{figure}

In Fig. \ref{bf}, we compare different HBF methods, including genetic-based algorithm, alternative optimization, and zero-forcing for providing appropriate beamformers of $\mathbf{W_{BB}}$ and $\mathbf{W_{RF}}$ with or without the optimized RF/antenna selection mechanism. The MARS-P is performed for the optimized selection, whilst full selection turns on all antennas and RF chains. Genetic-based algorithm provides the joint solution by considering both beamformers at the same time through the process of gene generation, elite selection crossover and mutation. In alternative optimization, the two parameters undergo rotational optimization, i.e., $\mathbf{W_{BB}}$ is optimized based on the optimal $\mathbf{W_{RF}}$ from previous iteration. Note that optimizing $\mathbf{W_{BB}}$ indicates the method only optimizes $\mathbf{W_{BB}}$ with random $\mathbf{W_{RF}}$. Same process takes place for $\mathbf{W_{RF}}$-only optimization. Random method is conducted for the non-optimized beamforming. In Fig. \ref{bf}, we can readily observe that it has a lower EE due to high power consumption of full selection. Performance gain of additional beamforming can also be achieved, i.e., joint optimization has improved EE with approximately $5.3\%$, $14\%$, $30.6\%$, $52.9\%$, $62.8\%$ compared to alternative optimization, optimizing $\mathbf{W_{BB}}$/$\mathbf{W_{RF}}$-only, zero-forcing, non-optimized beamforming, respectively. This is because more power can be preserved upon satisfaction of rate requirement.

\section{Conclusions}\label{ch:Conclusion}
	In this paper, we propose an LDPC-based HBF structure and MARS scheme to jointly consider HBF as well as RF chain and antenna selection for the minimization of operating power consumption, which guarantees QoS and available power utilization. MARS can be employed under sequential (MARS-S) and parallel (MARS-P) message passing under either a distributed or centralized architecture. Moreover, heuristic HBF scheme is designed based on the continuous genetic algorithm. The performance reveals that MARS-P has faster convergence than MARS-S due to parallel information delivery. However, MARS-S achieves lower power consumption and higher EE due to sequential decisions from fewer neighboring nodes. Additionally, out-of-date messages of policies are solved by selecting appropriate learning rates for RF chains and antenna selections. The overall performance is more significantly influenced by RF chains than antennas under different numbers of controllers, connections, and hardware impairment effects. Moreover, the proposed LDPC-based structure with MARS achieves the lowest power and highest EE since it possesses considerably fewer links than the fully connected architecture but with more information exchanged than the partially connected architecture. Additionally, benefiting from better messages and policies passed from other RF/antenna nodes, MARS outperforms the existing algorithms, namely, round-robin, greedy-based, and genetic-based selection methods, as well as dynamic adjustment of partial/full connection, in the open literature. Moreover, the joint solution leveraging both HBF and RF/antenna selection schemes accomplishes the highest EE performance.

\bibliographystyle{IEEEtran}
\bibliography{IEEEabrv}

\end{document}